\let\amalg=\undefined
\let\coprod=\undefined
\let\jmath=\undefined
\DeclareSymbolFont{cmsymbols}{OMS}{cmsy}{m}{n}
\DeclareSymbolFont{cmlargesymbols}{OMX}{cmex}{m}{n}
\DeclareMathSymbol{\amalg}{\mathbin}{cmsymbols}{"71}
\DeclareMathSymbol{\coprod}{\mathop}{cmlargesymbols}{"60}
\DeclareSymbolFont{cmletters}{OML}{cmm}{m}{it}
\DeclareMathSymbol{\jmath}{\mathord}{cmletters}{"7C}
\newcommand{\takeout}[1]{\empty}
\newtheorem{theorem}{Theorem}[section]
\newtheorem{proposition}[theorem]{Proposition}
\newtheorem{corollary}[theorem]{Corollary}
\newtheorem{lemma}[theorem]{Lemma}
\theoremstyle{definition}
\newtheorem{definition}[theorem]{Definition}
\newtheorem{assumption}[theorem]{Assumption}
\newtheorem{example}[theorem]{Example}
\newtheorem{remark}[theorem]{Remark}
\numberwithin{equation}{section}
\def\refeq#1{(\ref{#1})}
\newcommand*{\longhookrightarrow}{\ensuremath{\lhook\joinrel\relbar\joinrel\rightarrow}}
\let\oldrho=\rho
\renewcommand{\rho}{\varrho}
\def\set{\mathsf{Set}}
\def\jsl{\mathsf{Jsl}}
\def\coalg{\mathsf{Coalg}}
\def\coalgf{\coalg_{\mathsf{f}}}
\def\A{\mathcal{A}}
\def\Vec{\mathsf{Vec}}
\def\VecR{\Vec_\real}
\def\powf{\mathcal{P}_\mathsf{f}}
\def\pow{\mathcal{P}}
\def\real{\mathbb{R}}
\def\nat{\mathbb{N}}
\def\F{\mathbb{F}}
\def\M{\mathbb{M}}
\def\o{\cdot}
\def\inr{\mathsf{inr}}
\def\inl{\mathsf{inl}}
\def\eps{\varepsilon}
\def\id{\mathit{Id}}
\title{On the specification of operations\\on the rational behaviour of systems}
\author{Marcello M.~Bonsangue
\institute{LIACS, Leiden University}
\institute{The Netherlands}
\email{marcello@liacs.nl}
\and
 Stefan Milius
\institute{Technische Universit\"at Braunschweig\\ Germany}
\email{mail@stefan-milius.eu}
\and
Jurriaan Rot\thanks{This author is supported by the NWO project CoRE.}
\institute{LIACS, Leiden University}
\institute{The Netherlands}
\email{jrot@liacs.nl}
}
\begin{document}
%
%
\FXRegisterAuthor{mb}{amb}{Mar}
\FXRegisterAuthor{sm}{asm}{Ste}
\FXRegisterAuthor{jr}{ajr}{Jur}

\maketitle

\begin{abstract}
Structural operational semantics can be studied at the general level of
distributive laws of syntax over behaviour. This yields specification
formats for well-behaved algebraic operations on final coalgebras,
which are a domain for the behaviour of all systems of a given type
functor. We introduce a format for specification of algebraic
operations that restrict to the \emph{rational fixpoint} of a functor,
which captures the behaviour of \emph{finite} systems. 
In other words, we show that rational behaviour is closed
under operations specified in our format. As applications we consider
operations on regular languages, regular processes and finite weighted
transition systems.
\end{abstract}

\section{Introduction}
\emph{Structural operational semantics} (SOS) is a popular and widely used framework for
defining operational semantics by means of transition system specifications. Syntactic
restrictions on the format of these specifications give rise to algebraic properties
of operations on system behaviour~\cite{AFV}, e.\,g.,~GSOS
rules~\cite{BIM95} ensure that bisimilarity is a congruence. 

The key insight to give a uniform mathematical treatment of various
flavours of SOS is that the theory of \emph{coalgebras} provides a common
framework for the study of state-based systems and their
behaviour. This includes labelled transition systems but also stream automata, 
(non-)deterministic automata, weighted transition systems and many
more. The type of a coalgebra is expressed by an {endo}\-functor $F$, and
a canonical domain for system behaviour is provided by the final
$F$-coalgebra.

Turi and Plotkin~\cite{TP97} show in their seminal paper that the
interplay between syntax and behaviour given by transition system
specifications can be generalized by \emph{distributive laws} of a
functor $\Sigma$, representing the syntax, over a functor $F$,
representing the behaviour. They formulate and prove that bisimilarity
is a congruence at this level of generality. The final $F$-coalgebra here plays an important r\^{o}le as the
denotational model of a transition system specification. In
particular, a distributive law induces a canonical $\Sigma$-algebra
structure on the final coalgebra for $F$.

\takeout{ 
The theory of \emph{coalgebras} provides a general categorical framework for the study of 
state-based systems, including labelled transition systems but also stream automata, 
(non-)deterministic automata and weighted transition systems. 
The type of a coalgebra is expressed in terms of an endofunctor $F$. This allows
for a uniform study of system behaviour. In fact, under some mild conditions 
on the functor $F$, a canonical domain of behaviour exists and is given by
the so-called \emph{final coalgebra}.

The interplay between syntax and behaviour given by formats on transition system specifications
is generalized to the theory of coalgebra by \emph{distributive laws} of a functor $\Sigma$, 
representing the syntax, over a functor $F$, representing the behaviour. More precisely, 
such a distributive law defines a $\Sigma$-algebra structure on the final coalgebra of 
$F$~\cite{TP97,Klin11}. 
} 

But the final $F$-coalgebra is the domain of the behaviour of
\emph{all} $F$-coalgebras, and often it is interesting to study the
behaviour of only \emph{finite-state} systems, such as finite automata
or regular processes. In fact, finite-state systems have nice
decidability properties and are amenable to automated verification
techniques.  The \emph{rational fixpoint} of a set functor $F$ is the
subcoalgebra of the final coalgebra given by the behaviours of all
finite coalgebras~\cite{AMV06,Milius10}.  For example, regular
languages, rational streams~\cite{Rutten08}, rational formal power
series~\cite{DKV09} 
and regular trees for a signature~\cite{Courcelle83} form rational
fixpoints of appropriate functors $F$.

In this paper we investigate \emph{bipointed specifications}, a
restricted type of distributive laws which induces operations on the
rational fixpoint of a functor $F$ as a restriction of the same
operations on the final coalgebra.  As a result we show that regular
system behaviour is closed under operations induced by bipointed
specifications. So this yields an easy syntactic criterion to check
that regular behaviour is closed under certain algebraic
operations. Applications include operations on regular languages and
finite automata, such as the well-known shuffle operator, operations
on finite weighted transition systems and regular processes.


There is a large body of work on SOS formats and distributive laws
(see~\cite{Klin11} for a good overview). Bipointed specifications appear (without a
name) as an intermediate format between abstract toy SOS~\cite{Klin07}
and the abstract operational rules of~\cite{TP97}. However, we are not
aware of any work on formats for finite coalgebras. 
The only exception is the work on labelled transition systems by
Aceto~\cite{aceto94} (see also~\cite{AFV}). When instantiated on coalgebras corresponding to
labelled transition systems, bipointed specifications coincide with
specifications in the \emph{simple GSOS} format of loc.~cit.~on finite signatures. Our contribution
can thus be seen as a generalization of the simple GSOS format to
the realm of distributive laws. In~\cite{aceto94,AFV} there is also an
extension to countable signatures with certain finite dependencies
among the operators, and it is proved that the labelled transition
system induced by a simple GSOS specification is regular, i.\,e., for
each closed process term $P$ the ensuing transition system defining the
operational semantics of $P$ has finitely many states
(see~\cite[Theorem~5.28]{AFV}). In future work we shall incorporate
such a result in our theory. 
  
  The outline of this paper is as follows. In the next section we introduce
  the necessary preliminaries. Then in Section~\ref{sec-sos} we present
  our specification format. This induces an algebra on the rational
  fixpoint, as shown in Section~\ref{sec-alg}. We proceed in 
  Section~\ref{sec-app} with several applications of the theory, 
  and we finish in Section~\ref{sec-conc} with conclusions and
  suggestions for future work. 


\section{Preliminaries}\label{sec-prel}

We assume that the reader is familiar with basic notions of category
theory. With $\set$ we denote the category of sets and functions. In
any category we write products and coproducts with their projections
and injections, respectively as
$
\xymatrix@1{
  A & A\times B \ar[l]_-{\pi_0} \ar[r]^-{\pi_1} & B
}
\ \text{and}\ 
\xymatrix@1{
   C \ar[r]^-\inl & C + D & D \ar[l]_-\inr}.
$
The corresponding unique induced morphisms are denoted $\langle a,b\rangle: E
\to A \times B$ and $[c,d]: C + D \to E$.

\subsection{Algebras and coalgebras}

Let $\mathcal{A}$ be a category and $F: \mathcal{A} \rightarrow \mathcal{A}$ a functor. 
An \emph{$F$-algebra} is a pair $(A, \alpha)$ where $A$ is an object of $\mathcal{A}$ called
the \emph{carrier} and $\alpha: FA \rightarrow A$ is a morphism called
the \emph{structure} of the algebra.  Given algebras $(A, \alpha)$ and $(B, \beta)$, an
\emph{algebra homomorphism} is a map $f: A \rightarrow B$ such that 
$f \circ \alpha = Ff \circ \beta$.
A \emph{signature} is a set $\Sigma$ of operation symbols with
prescribed arity $|\sigma| \in \nat$ for each $\sigma \in \Sigma$.
This can equivalently be represented as a \emph{polynomial} functor 
\[
\Sigma X = \coprod_{\sigma \in \Sigma} X^{|\sigma|}.
\]
(We shall abuse notation and denote by $\Sigma$ both a signature and
its corresponding polynomial functor.)
For example, a signature $\Sigma_0$ on $\set$ consisting of a binary
operation symbol $b$ and a constant symbol $c$ corresponds to the functor $\Sigma_0 X
= X \times X + 1$. A $\Sigma_0$-algebra then is a set $A$ together
with an actual binary operation $b_A : A \times A \to A$ and a
constant $c_A \in A$, and algebra homomorphisms are precisely the maps
between algebras preserving the binary operation and the constant. 
\begin{example} 
 A \emph{join-semilattice} is a set $S$ with a binary operator
$\vee: S \times S \rightarrow S$ called the \emph{join}, and 
an element $\bot \in S$ (or $\bot: 1 \rightarrow S$) called \emph{bottom}; 
equivalently, it is an algebra $[\vee, 0]: S \times S + 1 \rightarrow S$. 
The join is associative, commutative and idempotent, and the bottom
is the  identity element with respect to the join. With 
$\jsl$ we denote the category of join-semilattices and homomorphisms
between them.
\end{example}

An $F$-\emph{coalgebra} is a pair $(S, f)$ such that $S$
is an object of $\mathcal{A}$, called the \emph{carrier}, and $f: S \rightarrow FS$
is an arrow, called the \emph{transition structure} or \emph{dynamics}. For 
coalgebras $(S, f_S)$ and $(T, f_T)$, a 
\emph{coalgebra homomorphism} is a morphism $h: S \rightarrow T$ such that 
$f_T \circ h = Fh \circ f_S$. If $\mathcal{A} = \set$ and $(S,f_S)$
and $(T,f_T)$ are coalgebras, then a \emph{bisimulation} is a relation
$R \subseteq S \times T$ such that $R$ carries a coalgebra structure $f_R$
and the projection maps $\pi_0: R \rightarrow S$
and $\pi_1: R \rightarrow T$ are coalgebra homomorphisms from
$(R,f_R)$ to $(S, f_S)$ and $(T, f_T)$, respectively. 
We denote by 
\[\coalg(F)\] 
the category of $F$-coalgebras and their homomorphisms. Of special interest are
\emph{final coalgebras}, i.\,e., final objects of categories
$\coalg(F)$, which exist under mild conditions on $F$. 
Thus, if a category $\coalg(F)$ has a final coalgebra $(\nu F, t)$, then there
exists, for each $F$-coalgebra $(S, f)$ a unique coalgebra
homomorphism $f^{\dagger}: S \rightarrow \nu F$. A final coalgebra is
determined uniquely up to isomorphism. Moreover, by the
famous Lambek Lemma~\cite{lambek}, the transition structure $t: \nu F \to
F(\nu F)$ is an isomorphism. 
The final coalgebra can be thought of as a canonical domain of behaviour of the type of
systems corresponding to the functor $F$. We consider several
examples.
\takeout{ 
\begin{remark}
  Most functors of interest in this paper are \emph{finitary} functors
  of $\set$, i.\,e., they are determined by their behaviour on finite
  sets. More precisely, a functor $F: \set \rightarrow \set$ is
  finitary if it preserves filtered colimits; equivalently, $F$ is
  \emph{bounded} (see, e.\,g.,~Ad\'amek and Trnkov\'a~\cite{at}),
  i.\,e., for every set $X$ and every element $x \in FX$, there is a
  finite subset $i: Y\hookrightarrow X$ such that
  $x \in Fi[FY] \subseteq FX$. 

  For a finitary functor $F$ on $\Set$ the final coalgebras can be
  described is
  essentially given as the coproduct of all $F$-coalgebras modulo the
  greatest bisimulation ....
\end{remark}
}

\begin{example}\label{ex:coalg}
  \begin{enumerate}[(1)]
  \item Coalgebras for the functor $FX = \real \times X$ on $\set$,
    where $\real$ is the set of real numbers, are often called
    \emph{stream systems} over the reals. The carrier of the final $F$-coalgebra is
    the set $\real^\omega =\{\sigma \mid \sigma: \nat \rightarrow
    \real\}$ of all streams (infinite sequences) of elements of
    $\real$. The transition structure $\langle o, t \rangle:
    \real^\omega \rightarrow \real \times \real^\omega$ is defined as
    $o(\sigma) = \sigma(0)$ and $t(\sigma)(n) = \sigma(n+1)$.
  \item Deterministic automata with input alphabet $A$ are coalgebras
    for the functor $FX = 2 \times X^A$, where $2 = \{0,1\}$. Indeed,
    to give a coalgebra $f: S \to 2 \times S^A$ precisely corresponds
    to giving a set $S$ of states with a map 
    $o: S \to 2$ (indicating final states) and a map $t: S
    \to S^A$, where $t(s)(a)$ is the successor of state $s$ under input
    $a$. The final coalgebra is carried by the set of all formal
    languages $\pow(A^*)$ with its coalgebra structure given by $o:
    \pow(A^*) \to 2$ with $o(L) = 1$ iff $L$ contains the empty word
    and $t: \pow(A^*) \to \pow(A^*)^A$ given by the language derivative
    $t(L)(a) = \{\,w \mid aw \in L\,\}.$
    For a given automaton $(S, f)$ the unique coalgebra homomorphism
    maps a state to the language it accepts. 
    
  \item Labelled transition systems (LTS) with actions from the set $A$ are coalgebras for the functor $FX
    = \powf(A \times X)$. Indeed, a coalgebra $f: X \to \powf(A \times
    X)$ corresponds precisely to giving a set $X$ of states and a
    transition relation $R \subseteq X \times A \times X$ that is
    \emph{finitely branching}, i.\,e., for every $x \in X$ there are only finitely
    many $a \in A$ and $x' \in S$ with $(x,a,x') \in R$. The final
    coalgebra for $F$ exists and can be thought of as consisting of
    processes modulo strong bisimilarity of Milner~\cite{milner}. More
    precisely, it follows from~\cite[Proposition~5.16]{AMV06}
    (cf.~also Barr~\cite{barr_coalg}) that the final coalgebra is the coproduct
    of all countable $F$-coalgebras modulo the greatest
    bisimulation.\footnote{This can be thought of as the coproduct of
      \emph{all} coalgebras modulo the greatest bisimulation; but this
      coproduct is a proper class, whence the restriction to countable
    coalgebras.}

  \item A very similar example are non-deterministic automata with a
    finite input alphabet $A$. They are coalgebras for $FX = 2 \times
    (\powf X)^A$. Here the final coalgebra consists of all behaviours
    modulo bisimilarity of non-deterministic automata; more precisely,
    $\nu F$ is the coproduct of all countable $F$-coalgebras modulo
    the largest bisimulation as in the previous point.  
    A (necessarily) isomorphic description of $\nu F$ follows from the description of
    the final coalgebra for $\powf$ given by Worrell~\cite{worrell};
    (see also~\cite{BMS12}): the elements of $\nu F$ are
    finitely branching \emph{strongly extensional} trees with edges
    labelled in $A$ and nodes labelled in $2$. Due to lack of space we
    omit recalling the definition of a strongly extensional tree and
    refer the reader to~\cite{worrell,BMS12} instead. 

  \item Weighted transition systems (WTS) are labelled transition
    systems where transitions have weights (modelling multiplicities,
    costs, probabilities, etc.). We consider WTS's where the weights
    are elements of a commutative monoid $\M = \langle M, +, 0
    \rangle$. In order to define them coalgebraically as done
    in~\cite{Klin09}, we first consider the $\set$ endofunctor
    $\mathcal{F}_{\M}$, which acts on a set $X$ and a function $f: X
    \rightarrow Y$ as
    $$\mathcal{F}_{\M}(X) = \{\phi: X \rightarrow M \mid \phi \text{ has finite support}\}
      \qquad 
      \mathcal{F}_{\M} f(\phi)(y) = \sum_{x \in f^{-1}(y)} \phi(x),
    $$
    where a function $\phi: X \to M$ has finite support if $\phi(x) \neq 0$ for finitely many $x \in X$. A \emph{weighted transition system}
    is a coalgebra for the functor $FX = (\mathcal{F}_{\M} X)^A$ for a set of labels $A$. The final $F$-coalgebra
    exists for any monoid $\M$. Similarly as before, it is the coproduct of all countable $F$-coalgebras modulo weighted bisimilarity of~\cite{Klin09}.

    \takeout{Two elements $x, y$ of the carrier of an $F$-coalgebra are called \emph{observationally equivalent}
    if they are mapped to the same element in the final coalgebra. Observational equivalence coincides with 
    so-called \emph{weighted bisimilarity}~\cite{Klin09}.
  }

\takeout{
  \item Weighted automata were invented by
    Sch\"utzenberger~\cite{schuetzenberger} (see also~\cite{DKV09}). They are non-deterministic automata
    where transitions are labelled by weights (modelling multiplicities, costs,
    probabilities, etc.). Usually one takes the weights as elements
    from a semiring---here we consider only weights from a field
    $\F$ for simplicity. A weighted automaton with input alphabet $A$ is then given by a
    family of $n \times n$ transition matrices $M_a$, $a \in A$, and by an output vector $o$ of
    length $n$ over $\F$. This represents $n$ states, where each state
    $i$ has output $o(i)$ and $M_a(i,j)$ is the weight of the
    transition from $i$ to $j$ under input $a$. Equivalently, we have
    linear maps $o: \F^n \to \F$ and $M_a: \F^n \to \F^n$, $a \in
    A$. Combining these we see that a weighted automaton is a
    coalgebra $\F^n \to \F \times (\F^n)^A$ for the functor $FX = \F
    \times X^A$ on the category $\Vec_\F$ of vector spaces over $\F$. 

    It is well-known that every state of a weighted automaton accepts
    a \emph{formal power series} (or weighted language) viz.~an element of
    $\F^{A^*}$. Indeed, for all $s \in \F^n$ we obtain $L_s: A^* \to
    \F$ by simultaneous induction on the length of $w \in A^*$:
    \[
    L_s(\eps) = o(s)\qquad L_s(aw) = L_{M_a(s)}(w)
    \qquad\text{for all $s \in \F^n$.}
    \]

    Notice that $\F^{A^*}$ carries the canonical componentwise
    structure of a vector space and a natural coalgebra structure given
    by $o: \F^{A^*} \to \F$ with $o(L) = L(\eps)$ and $t: \F^{A^*} \to
    (\F^{A^*})^A$ with $t(L)(a) = \lambda w. L(aw)$. This turns
    $\F^{A^*}$ into a final $F$-coalgebra and for every weighted
    automaton the unique homomorphism into $\nu F$ maps a state to its
    accepted formal power series as defined above. 

    Notice that the classical weighted automata are precisely those
    $F$-coalgebras with a finite dimensional carrier $\F^n$. So one
    expects that not every formal power series can be accepted by a
    weighted automaton; in fact, the (finite dimensional) weighted
    automata accept precisely the \emph{rational} formal power series
    (see~\cite{DKV09}).
}

  \item Let $F = \Sigma$ be a polynomial functor on $\set$. The final
    coalgebra $\nu F$ is carried by the set of all (finite and
    infinite) $\Sigma$-trees, i.\,e., rooted and
    ordered trees labelled in the signature $\Sigma$ so that inner nodes with
    $n$ children are labelled by $n$-ary operation symbols and
    leaves are labelled by constant symbols. The coalgebra structure
    of $\nu F$ is given by the inverse of tree-tupling. 
  \end{enumerate}
\end{example}

\subsection{Locally finitely presentable coalgebras} 
We are interested in algebraic operations on rational behaviour,
i.\,e.,~behaviour of \emph{finite} coalgebras $(S, f)$ for a functor
$F$. Anticipating future applications in different categories than
$\set$, we present our results for endofunctors on general categories
$\A$ in which it makes sense to talk about ``finite'' objects and the
ensuing rational behaviour of ``finite'' coalgebras. So we work with
\emph{locally finitely presentable} categories of Gabriel and
Ulmer~\cite{gu71} (see also Ad\'amek and Rosick\'y~\cite{ar94}), and
we now briefly recall the basics.

A functor $F: \mathcal{A} \rightarrow \mathcal{B}$ is called
\emph{finitary} if $\mathcal{A}$ has and $F$ preserves filtered colimits. An object $X$ of a
category $\mathcal{A}$ is called \emph{finitely presentable} if its
hom-functor $\mathcal{A}(X, -)$ is finitary. 
A category $\mathcal{A}$ is \emph{locally finitely presentable} (lfp) if
(a)~it is cocomplete, and 
(b)~it has a set of finitely presentable objects such that every object
of $\mathcal{A}$ is a filtered colimit of objects from that set.
\begin{example}
  \begin{enumerate}[(1)]
  \item The category $\set$ and the categories of posets and graphs
    and their morphisms are lfp with finite sets, posets and graphs,
    respectively, as finitely presentable objects. 
  \item Finitary varieties are categories of algebras for a finitary
    signature satisfying a set of equations (e.\,g.,~groups, monoids,
    join-semilattices etc.). Such categories are lfp with the
    finitely presentable objects given by those algebras which can be
    presented by finitely many generators and relations. 
  \item As a special case consider \emph{locally finite} varieties,
    which are varieties where the free algebras on finitely many
    generators are finite (e.\,g.,~$\jsl$, distributive lattices or
    Boolean algebras). Here the finitely presentable objects are
    precisely the finite algebras. 
  \item Another special case of point~(2) are the categories $\Vec_{\F}$
    of vector spaces over a field $\F$, where the finitely presentable objects are precisely
    the finite dimensional vector spaces. 
  \end{enumerate}
\end{example}

\begin{remark}
  On the category $\set$, a finitary
  functor is determined by its behaviour on finite sets. More precisely,
  a functor $F: \set \rightarrow \set$ is finitary iff it is
  \emph{bounded} (see, e.\,g.,~Ad\'amek and Trnkov\'a~\cite{at}),
  i.\,e., for every set $X$ and every element $x \in FX$, there is a
  finite subset $i: Y\hookrightarrow X$ such that
  $x \in Fi[FY] \subseteq FX$. 
\end{remark}

\begin{example} We list some examples of finitary functors. 
  \begin{enumerate}[(1)]
  \item The \emph{finite} powerset functor $\powf$ is 
    finitary, whereas the ordinary powerset functor $\mathcal{P}$ is
    not.  
  \item The functor $FX = X^A$ is finitary if and only if $A$ is a
    finite set.
  \item More generally, the class of finitary set functors contains all
    constant functors and the identity functor, and it is closed under
    finite products, arbitrary coproducts and composition. Thus, a
    polynomial functor $\Sigma$ is finitary iff every operation symbol
    of the corresponding signature has finite arity (but there may be
    infinitely many operations).
  \item The functors $\mathcal{F}_{\M}$ are finitary for every monoid
    $\M$. 
  \item The functor $FX = \real \times X$ is finitary both on $\set$
    and on $\VecR$.
  \end{enumerate} 
\end{example}

\begin{assumption}
  Throughout the rest of this paper we assume, unless stated otherwise, that
  $\mathcal{A}$ is a locally finitely presentable category
  and $F: \mathcal{A} \rightarrow \mathcal{A}$ is a finitary functor. 
  So $F$ has a final coalgebra $t: \nu F \rightarrow F (\nu F)$ (see
  Makkai and Par\'e~\cite{mp89}). 
\end{assumption}

For a functor $F$ on an lfp category $\A$ the notion of a ``finite''
coalgebra is captured by a coalgebra having a finitely presentable
carrier. We denote by
\[
\coalgf(F)
\] the full subcategory of $F$-coalgebras $f: S \rightarrow FS$ with
$S$ finitely presentable. 
In order to talk about the behaviour of finite coalgebras in this
setting we would like to consider a coalgebra that is final among all
coalgebras in $\coalgf(F)$. However, $\coalgf(F)$ does not have a
final object in general, and so we consider the larger category of
locally finitely presentable coalgebras in which the desired final
object exists. 

An $F$-coalgebra $(S, f)$  is called \emph{locally finitely presentable} if
the canonical forgetful functor 
\[
\coalgf(F)/(S,f) \rightarrow \mathcal{A}/S
\] 
is cofinal~\cite{BMS12, Milius10}. In lieu of going into the details of this
definition we recall the following result, which gives a structure
theoretic characterisation of locally finitely
presentable coalgebras that we will use later:
\begin{theorem}[\cite{Milius10}]\label{thm-lfpc-colimit}
A coalgebra is locally finitely presentable iff it is a filtered colimit of 
a diagram of coalgebras from $\coalgf(F)$, i.\,e.,~a colimit of a
diagram of the form $\mathcal{D} \to \coalgf(F) \hookrightarrow \coalg(F)$.
\end{theorem}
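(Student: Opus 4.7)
The plan is to prove both directions, using two core facts: (i)~in the lfp category $\A$, every object $S$ is the filtered colimit of the canonical diagram $\A_f/S \to \A$ of finitely presentable objects over $S$; and (ii)~because $F$ is finitary, the forgetful functor $\coalg(F) \to \A$ creates filtered colimits, so any filtered colimit in $\coalg(F)$ is computed carrier-wise in $\A$.

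For the ``if'' direction I would start from a presentation $(S,f) = \mathrm{colim}_{\mathcal{D}}\, D$ for a filtered $D\colon \mathcal{D} \to \coalgf(F)$. By~(ii), $S$ is the same filtered colimit in $\A$, with colimit injections $\iota_d\colon S_d \to S$ that are already coalgebra homomorphisms into $(S,f)$. To show that the forgetful functor $U\colon \coalgf(F)/(S,f) \to \A/S$ is cofinal, I would take any $p\colon X \to S$ with $X$ finitely presentable and use finite presentability of $X$ together with filteredness of $\mathcal{D}$ to factor $p$ through some $\iota_d$; this produces an object of $\coalgf(F)/(S,f)$ lying above $p$. The connectedness condition would be handled by the same technique, pushing two competing factorisations further along $\mathcal{D}$ and invoking finite presentability once more.

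For the ``only if'' direction I would assume cofinality of $U$ and first verify that $\coalgf(F)/(S,f)$ is filtered: finitely many objects can be combined via coproducts in $\coalg(F)$ (which are created by the forgetful functor and preserve finite presentability in the lfp category $\A$), and parallel pairs can be coequalised in a similar fashion. Next, since $U$ actually lands in $\A_f/S$ and is cofinal, the presentation $S = \mathrm{colim}(\A_f/S \to \A)$ from~(i) transports to $S = \mathrm{colim}(\coalgf(F)/(S,f) \to \coalg(F) \to \A)$. By~(ii) this filtered colimit lifts uniquely to $\coalg(F)$ with carrier $S$, and the tautological cocone consisting of the second components of the objects of $\coalgf(F)/(S,f)$ identifies the lifted colimit with $(S,f)$.

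The main obstacle is the cofinality/filteredness bookkeeping: one must ensure that factorisations found in $\A$ genuinely lift to coalgebra homomorphisms, and that the colimit computed carrier-wise in $\A$ really recovers $(S,f)$ rather than merely some coalgebra with the same carrier. These technicalities are already worked out in detail in~\cite{Milius10}, which I would cite for the full verification rather than reproduce here.
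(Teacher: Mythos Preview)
The paper does not actually prove this theorem: it is quoted verbatim from~\cite{Milius10} and used as a black box, so there is no ``paper's own proof'' to compare your proposal against. Your sketch is a reasonable outline of the standard argument and, appropriately, you already defer the technical verification to~\cite{Milius10}.

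One point worth tightening: in the ``if'' direction you check cofinality of $U\colon \coalgf(F)/(S,f) \to \A/S$ only for objects $p\colon X \to S$ with $X$ finitely presentable, but the target slice $\A/S$ contains arbitrary $X$. You should either argue that the general case reduces to the finitely presentable one (writing an arbitrary $X$ as a filtered colimit of finitely presentable objects and using that cofinality interacts well with this), or note that the definition in~\cite{Milius10} may equivalently be phrased with $\A_f/S$ in place of $\A/S$ since the inclusion $\A_f/S \hookrightarrow \A/S$ is itself final in an lfp category. Also, in the ``only if'' direction your claim that $\coalgf(F)/(S,f)$ is filtered is correct but deserves the remark (made elsewhere in the paper) that $\coalgf(F)$ is closed under finite colimits in $\coalg(F)$; this immediately gives filteredness of the slice.
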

\begin{example} We recall from~\cite{Milius10,BMS12} more concrete descriptions of locally
  finitely presentable coalgebras in some categories of interest. 
  \begin{enumerate}[(1)]
  \item A coalgebra for a functor on $\set$ is locally finitely
    presentable iff it is \emph{locally finite}, i.\,e., every finite subset of its carrier is
    contained in a finite subcoalgebra.

  \item Similarly, for a functor on a locally finite variety a
    coalgebra is locally finitely presentable iff every
    finite subalgebra of its carrier is contained in a finite
    subcoalgebra. 

  \item A coalgebra $(S,f)$ for a functor on $\Vec_\F$ is locally finitely
    presentable if and only if every finite dimensional subspace of
    its carrier $S$ is contained in a subcoalgebra $(S',f')$ of $(S,f)$
    whose carrier $S'$ is finite dimensional.
  \end{enumerate}
\end{example}


\subsection{The rational fixpoint}
The final $F$-coalgebra is thought to capture the behaviour of all
systems of type $F$. The behaviour of all ``finite'' systems is captured
by the so-called \emph{rational fixpoint}. We now recall its
definition and key properties as well as some illustrative examples
from~\cite{AMV06,Milius10,BMS12}. 

First it is easy to see that the category $\coalgf(F)$ is closed under finite
colimits, so the embedding
\begin{equation}\label{eq:E}
  E: \coalgf(F) \hookrightarrow \coalg(F)
\end{equation}
is an (essentially small) filtered diagram. We define
a coalgebra $$r: \rho F \rightarrow F (\rho F)$$
to be the colimit of $E$, i.e., $(\rho F, r) = \text{colim } E$. This
coalgebra is a fixpoint of $F$~\cite{AMV06}, and it is characterized
by a universal property both as a coalgebra and as an algebra. 
This is the content of the following theorem. Statement 3 in the theorem
below mentions iterative algebras for $F$. We do not recall that concept as it is
not needed in the present paper; we refer the interested reader to~\cite{AMV06}.
\begin{theorem}Let $(\rho F, r)$ be as above. Then 
\begin{enumerate}
\item $(\rho F, r)$ is a fixpoint of $F$, i.e., $r$ is an isomorphism, and
\item $(\rho F, r)$ is the final locally finitely presentable $F$-coalgebra, and finally
\item $(\rho F, r^{-1})$ is the initial iterative $F$-algebra.
\end{enumerate}
\end{theorem}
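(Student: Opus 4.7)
The plan is to handle the three claims in sequence, with (1) as the technical keystone and (2), (3) following from it by formal use of universal properties and Theorem~\ref{thm-lfpc-colimit}.

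For claim (1), the finitariness of $F$ is essential. Since the forgetful functor $U : \coalg(F) \to \mathcal{A}$ creates colimits, $\rho F$ is the filtered colimit $\operatorname{colim}_i S_i$ in $\mathcal{A}$ of the finitely presentable carriers $(S_i, f_i) \in \coalgf(F)$. Because $F$ is finitary, $F(\rho F) = \operatorname{colim}_i F S_i$ as well, and the morphism $r$ is the unique mediator induced by the cocone with components $F \iota_i \circ f_i$, where $\iota_i$ denotes the colimit injection of $S_i$ into $\rho F$ (the cocone property reduces to $F\iota_j \circ F\phi = F\iota_i$ for each connecting map $\phi$, which follows since $\iota$ is a cocone). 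To exhibit the inverse of $r$, I would construct a reverse cocone into $\rho F$: for each $i$ the coalgebra $(S_i, f_i)$ arises together with a ``shifted'' counterpart $(FS_i, F f_i)$, and a diagram chase using the colimit universal property in $\coalg(F)$ produces a map $F(\rho F) \to \rho F$ that is then seen to invert $r$ on both sides.

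For claim (2), suppose $(C, c)$ is locally finitely presentable. By Theorem~\ref{thm-lfpc-colimit} we may write $(C, c) = \operatorname{colim} D$ for some filtered $D : \mathcal{D} \to \coalgf(F)$. The injections $D(d) \to \rho F$ of the cocone defining $\rho F$ assemble into a cocone over $E \circ D$ in $\coalg(F)$, yielding via the colimit universal property for $(C, c)$ a coalgebra homomorphism $(C, c) \to (\rho F, r)$. Uniqueness is forced because every finite component of a hypothetical alternative homomorphism must coincide with the corresponding arrow of the cocone defining $\rho F$.

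For claim (3), I would invoke the theory of iterative algebras from~\cite{AMV06}. For any iterative $F$-algebra $(A, \alpha)$, every flat equation morphism has a unique solution; viewing each $f_i : S_i \to F S_i$ as such an equation yields a unique solution $\mathrm{sol}_i : S_i \to A$, and by uniqueness these cohere into a cocone over $E$, inducing a unique $\mathcal{A}$-morphism $\rho F \to A$. That this morphism is an $F$-algebra homomorphism, and its uniqueness, come from further appeals to the iterativity of $(A, \alpha)$. The principal obstacle throughout is producing and verifying $r^{-1}$ in claim (1), where the compatibility checks between the two cocones and the colimit universal property have to be carried out carefully; (2) and (3) are then largely diagram chases with the tools assembled.
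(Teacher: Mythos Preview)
The paper does not prove this theorem; it merely records the result and cites~\cite{AMV06,Milius10}, so there is no in-paper argument to compare against directly. Assessing your sketch on its own: parts (2) and (3) are essentially sound. The uniqueness clause in (2) does rest on a fact you leave implicit, namely that a coalgebra with finitely presentable carrier is finitely presentable \emph{as an object of} $\coalg(F)$ (this uses that $F$ is finitary), so that any homomorphism from such a coalgebra into the filtered colimit $\rho F$ factors through some stage of $E$ \emph{as a coalgebra homomorphism} and therefore coincides with the colimit injection.

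Part (1), however, has a genuine gap, and your ordering of the three claims is inverted relative to how the result is actually established. To produce $F(\rho F) \to \rho F$ from the description $F(\rho F) \cong \operatorname{colim}_i FS_i$ you need coalgebra morphisms $(FS_i, Ff_i) \to (\rho F, r)$ forming a cocone. The only evident candidate is the colimit injection into $\rho F$, but that requires $(FS_i, Ff_i) \in \coalgf(F)$, i.e.\ $FS_i$ finitely presentable. Since $F$ is assumed only finitary, not strongly finitary, this fails in general: already for $FX = \real \times X$ on $\set$, $F$ of a finite set is infinite. No other cocone presents itself from the data you describe. In~\cite{AMV06} the isomorphism of $r$ is not obtained by any such direct colimit manipulation; one first establishes (3) and then runs a Lambek-type argument, the key lemma being that $F$ applied to an iterative algebra is again iterative, so that initiality yields a morphism $\rho F \to F(\rho F)$ inverse to $r^{-1}$. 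Thus (3), not (1), is the technical keystone, and (1) is its corollary.
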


\begin{remark}\label{rem:subcoalg}
  For $\A = \set$ the rational fixpoint $\rho F$ is the union 
  of all images $f^{\dagger}[S] \subseteq \nu F$, where $f: S \rightarrow FS$
  ranges over the \emph{finite} $F$-coalgebras and $f^{\dagger}: S \to
  \nu F$ is the unique coalgebra homomorphism
  (see~\cite[Proposition~4.6 and Remark~4.3]{AMV06}).
  So, in particular, we see that $\rho F$ is a subcoalgebra of $\nu F$. 

  For endofunctors on different categories than $\set$, this need not be the case as shown
  in~\cite[Example~3.15]{BMS12}. However, for functors preserving
  monomorphisms on categories of vector spaces over a field and on locally
  finite varieties such as $\jsl$  the rational fixpoint always is a
  subcoalgebra of $\nu F$ (see~\cite[Proposition~3.12]{BMS12}). 
\end{remark}

\begin{example}\label{ex:rat}
For each of the functors in Example~\ref{ex:coalg} we now mention the
rational fixpoints. For more examples see~\cite{AMV06,BMS12}.
\begin{enumerate}[(1)]
\item For the functor $FX = \real \times X$ on $\set$ whose final
  coalgebra is carried by the set of all streams over $\real$, the
  rational fixpoint consists of all streams that are \emph{eventually
    periodic}, i.e., of the form $\sigma = vwwww\ldots$ for words $v
  \in \real^*$ and $w \in \real^+$.  If we consider the similar functor $FV = \real \times V$
  on the category of vector spaces over $\real$, the rational fixpoint
  consists precisely of all \emph{rational streams} (see, 
  e.\,g.,~Rutten~\cite{Rutten08}).

\item Recall that deterministic automata are modeled by the functor
  $FX = 2 \times X^A$ on $\set$.  The carrier of the rational fixpoint
  of $F$ is the set of all languages accepted by \emph{finite}
  automata, viz.~the set of all \emph{regular} languages.
  If we define $F$ instead on the category $\jsl$ of
  join-semilattices, its rational fixpoint is still given by all
  regular languages, this time with the join-semilattice structure
  given by union and $\emptyset$.

\item For $FX = \powf(A \times X)$ on $\set$ we saw in
  Example~\ref{ex:coalg}(3) that the
  coalgebras are labelled transition systems and $\nu F$ consists of
  processes (modulo strong bisimilarity). In this case the rational
  fixpoint contains all \emph{finite-state} processes (modulo
  bisimilarity); more precisely, $\rho F$ is the coproduct of all
  \emph{finite} $F$-coalgebras modulo the largest bisimulation---this
  follows from the construction of $\rho F$ as the colimit of the
  diagram in~\refeq{eq:E}. 

\item Similarly, for $FX = 2 \times (\powf X)^A$ on $\set$, $\rho F$
  can be described as the coproduct of all finite $F$-coalgebras
  modulo the largest bisimulation. A different (isomorphic)
  description is that $\rho F$ consists of all \emph{rational}
  finitely branching strongly extensional trees with edges labelled in
  $A$ and nodes labelled in $2$, where a tree is rational if it has
  (up to isomorphism) only a finite number of subtrees.

\item For the functor $FX = (\mathcal{F}_{\M} X)^A$ of weighted transition
  systems the rational fixpoint is obtained as the coproduct of all finite
  WTS's modulo weighted bisimilarity. 

\takeout{ 
\item Consider, for a field $\mathbb{F}$ (for example
  $\real$), the functor $FV = \mathbb{F} \times V^A$ on $\Vec_\F$
  vector spaces over $\mathbb{F}$. The rational fixpoint of $F$ then
  consists of all rational formal power series~\cite{BMS12}. 
}

\item Let $F = \Sigma$ be a polynomial functor on $\set$, where the
  final coalgebra is carried by all $\Sigma$-trees. Then the rational
  fixpoint is given by all \emph{regular} $\Sigma$-trees (see
  Courcelle~\cite{Courcelle83}), i.\,e., all those $\Sigma$-trees
  having (up to isomorphism) only finitely many different subtrees;
  this description of regular trees is due to Ginali~\cite{ginali}.

\end{enumerate}
\end{example}

\section{Bipointed specifications}\label{sec-sos}

We still assume that $F: \A \to \A$ is a finitary endofunctor on the
lfp category $\A$. 

\begin{definition}\label{def-bipointed}
Let $\Sigma: \mathcal{A} \rightarrow \mathcal{A}$ be a functor. We call
a  natural transformation 
$$\lambda: \Sigma(F \times \id) \Rightarrow F(\Sigma + \id)$$
a \emph{bipointed specification}. 
\end{definition}
While this is a rather abstract and seemingly unusable specification format, by 
considering a specific functor $F$ one can often devise more concrete formats. We
discuss several examples in Section~\ref{sec-app}. 
For now let us consider the
definition of a parallel operator on transition systems, to give a basic
example of a bipointed specification. Klin~\cite[\S5.2]{Klin07} presents
a similar example and notices that it gives rise to a bipointed specification.
\begin{example}\label{ex-lts-parallel}
Recall that the functor corresponding to transition systems is $FX = \mathcal{P}_f(A \times X)$
on $\set$ and that we think of the elements of $\nu F$ as processes. 

We would like to define a parallel operator on processes, which can be defined
in standard SOS as follows:
$$
\frac{s \stackrel{a}{\rightarrow} s'}
{s || t \stackrel{a}{\rightarrow} s' || t}
\qquad\qquad
\frac{t \stackrel{a}{\rightarrow} t'}
{s || t \stackrel{a}{\rightarrow} {s || t'}}
$$
Intuitively this means that whenever $s$ can make an $a$-transition to some state $s'$, then
$s || t$ can make an $a$-transition to $s' || t$, and similarly for $t$.
Since we are interested in a single binary operator, the corresponding signature is 
$\Sigma X = X \times X$. Thus, the bipointed specification $\lambda:
\Sigma (F \times \id) \Rightarrow F(\Sigma + \id)$ is given by the following
family of maps: 
$$
\lambda_X: (\mathcal{P}_f(A \times X) \times X) \times (\mathcal{P}_f(A \times X) \times X) \rightarrow 
\mathcal{P}_f(A \times (X \times X + X)).
$$ 
Now a for a 4-tuple $(S,s,T,t)$ in the domain of $\lambda_X$, $S$ and
$T$ are the sets of outgoing transitions of $s$ and $t$,
respectively. Moreover, an element $(a, (u, v))$ in the codomain of
$\lambda_X$ corresponds to an $a$-transition to the state $u||v$. 
Thus, we may define $\lambda_X$ as
$$\lambda_X(S,s,T,t) = \{(a, (s',t)) \mid (a,s') \in S\} \cup \{(a,(s,t')) \mid (a,t') \in T\}.$$
\end{example}
It has been shown by Turi and Plotkin~\cite{TP97} and Bartels~\cite{Bartels04} that
natural transformations as in the previous definition and more general
ones (see Klin~\cite{Klin11} for an overview) induce algebraic structures on the final coalgebra $\nu F$. We 
recall how this construction works for our bipointed specifications. To this
end let $\lambda: \Sigma(F \times \id) \rightarrow F(\Sigma + \id)$ be a bipointed specification. 
We define a functor $\Phi: \coalg(F) \rightarrow \coalg(F)$ as
follows:
\begin{equation}\label{eq:phi}
\begin{array}{rcl}
  \Phi(S, f) & = &\left(\Sigma S + S \xrightarrow{\Sigma \langle f, id \rangle + f}
    \Sigma(F S \times S) + FS \xrightarrow{[\lambda_{S} , F \inr]}
    F(\Sigma S + S)\right), \\
  \Phi h & = & \Sigma h + h, \qquad \text{for any coalgebra homomorphism
  $h: (S, f) \to (T, g)$.}
\end{array}
\end{equation}
In order for $\Phi$ to be well-defined $\Phi h$ must be a coalgebra homomorphism, which indeed
follows from naturality of $\lambda$ and functoriality of $\Sigma$. We do not spell out the details,
but refer the interested reader to~\cite{Bartels04,Klin11}. Observe
that $\Phi$ is a \emph{lifting} of $\Sigma + \id$ to $\coalg(F)$,
i.\,e., for the forgetful functor $U: \coalg(F) \to \set$ we have
$(\Sigma + \id) \cdot U = U \cdot \Phi$.
\takeout{ 
the commutative square below:
\begin{equation}\label{diag:lift}
  \vcenter{
    \xymatrix{
      \coalg(F)
      \ar[r]^-\Phi
      \ar[d]_U
      &
      \coalg(F)
      \ar[d]^{U}
      \\
      \set 
      \ar[r]_-{\Sigma + \id}
      &
      \set
    }
  }
\end{equation}}

Now if we apply $\Phi$ to the final coalgebra $(\nu F, t)$ we obtain the following:
$$
\Sigma (\nu F) + \nu F \xrightarrow{\Sigma \langle t, id \rangle + t}
\Sigma(F (\nu F) \times \nu F) + F (\nu F) \xrightarrow{[\lambda_{\nu F} , F \inr]}
F(\Sigma (\nu F) + \nu F).
$$
By finality, there is a unique coalgebra homomorphism from $\Phi(\nu
F, t)$ to $(\nu F, t)$, and it is easy to prove that its right-hand
component is the identity on $\nu F$; so the homomorphism has the form
$$[\alpha, id]: \Sigma (\nu F) + \nu F \rightarrow \nu F.$$
Thus, we obtain a unique $\Sigma$-algebra $\alpha: \Sigma \nu F \rightarrow \nu F$
making the diagram below commute: 
\begin{equation}
  \label{diag:alpha}
  \vcenter{
    \xymatrix@C+1pc{
      \Sigma (\nu F) \ar[r]^-{\Sigma\langle t, id \rangle} \ar[d]_{\alpha}
      & 
      \Sigma(F (\nu F) \times \nu F) \ar[r]^-{\lambda_{\nu F}} 
      & 
      F(\Sigma (\nu F) + \nu F) \ar[d]^{F[\alpha, id]}
      \\
      \nu F \ar[rr]^t & & F \nu F
    }
  }
\end{equation}
In concrete instances, $\alpha$ provides the denotational semantics of
the algebraic operations as specified by $\lambda$, taking as 
arguments elements of the final coalgebra. Returning to the above Example~\ref{ex-lts-parallel},
for two processes $s$ and $t$, $\alpha(s,t)$ is indeed the parallel
composition $s||t$.

\begin{remark}
  \label{rem:GSOS}
The original abstract GSOS format considered by Turi and Plotkin is given by
natural transformations of the form 
$$\lambda: \Sigma (F \times \id) \Rightarrow FT_{\Sigma}$$
where $T_{\Sigma}$ is the free monad on $\Sigma$; for a polynomial
functor $\Sigma$ on $\set$, $T_{\Sigma}X$ is the set of
all terms of operations in $\Sigma$ over variables of $X$. This is more general
than the bipointed specifications of Definition~\ref{def-bipointed}. However, 
we will be interested in operations on the \emph{rational}
fixpoint. And in general,
operations on $\nu F$ defined by the above format need not restrict to
$\rho F$ as demonstrated by the following example.
\end{remark}
\begin{example}\label{ex-gsos-op}
Recall from Example~\ref{ex:rat}(1) the functor $FX = \real \times X$ whose coalgebras are stream
systems. A unary operation $p$ on the final coalgebra $\nu F = \real^\omega$ of
all real streams is specified by the following behavioural
differential equations:
\[
p(\sigma)(0) = \sigma(0) + 1
\qquad
p(\sigma)' = p (p(\sigma')),
\]
where $\sigma' = (\sigma(1), \sigma(2), \sigma(3), \ldots)$ denotes
the tail of the stream $\sigma$. Let $\Sigma X = X$ be the polynomial
functor for the signature with one unary operation symbol $p$. Then
the above behavioural differential equations give rise to the natural
transformation
\[
\ell_X: \Sigma FX = \real \times X \to \real \times T_\Sigma X =
FT_\Sigma X
\qquad
(r, x) \mapsto (r+1, p(p(x))),
\]
and we get an abstract GSOS rule as follows:
$
\lambda = (\xymatrix@1{
\Sigma(F \times \id) \ar@{=>}[r]^-{\Sigma \pi_0} & \Sigma F \ar@{=>}[r]^-{\ell} &
F T_\Sigma
}),
$
where $\pi_0: F \times \id \Rightarrow F$ denotes the left-hand product projection.
It is easy to see that the ensuing operation $p: \nu F \to \nu F$
satisfies
\[
(0,0,0, \ldots) \stackrel{p}{\longmapsto} (1, 2, 4, 8, \ldots, 2^n, \ldots).
\]
Clearly, the rational fixpoint $\rho F$, which consists of eventually
periodic streams, is not closed under the operation $p$. 
\end{example}

Even operations defined using bipointed specifications will not
restrict to $\rho F$ in general, when we simultaneously specify
infinitely many operations that depend on one another.

\begin{example}
  For $FX = \real \times X$ on $\set$ with $\nu F = \real^\omega$ we
  define infinitely many unary operations $u_n$, $n \in \nat$, by the following
  behavioural differential equations:
  \[
  u_n(\sigma)(0) = n 
  \qquad
  u_n(\sigma)' = u_{n+1}(\sigma').
  \]
  Let $\Sigma X = \nat \times X$ be the polynomial functor
  corresponding to the signature with the unary operation symbols
  $u_n$, $n \in\nat$. Then the above behavioral differential equations
  give rise to the natural transformation
  \[
  \ell: \Sigma FX = \nat \times \real \times X \to \real \times \nat
  \times X = F \Sigma X
  \qquad
  (n, r, x) \mapsto (n, n+1, x),
  \]
  and we get a bipointed specification as follows:
  $
  \lambda = (\xymatrix@1{
    \Sigma(F \times \id) \ar@{=>}[r]^-{\Sigma \pi_0} & \Sigma F
    \ar@{=>}[r]^-{\ell} &
    F\Sigma \ar@{=>}[r]^-{F\inl} & F(\Sigma + \id).
  }
  $
  The ensuing operations $u_n: \nu F \to \nu F$ satisfy 
  $
  (0,0,0,\ldots) \stackrel{u_n}{\longmapsto} (n,n+1,n+2,n+3,\ldots).
  $
  So the rational fixpoint $\rho F$ is not closed under these
  operations. 
\end{example}

\section{Algebras on the rational fixpoint}\label{sec-alg} 

In this section we show how a bipointed specification defines an
algebraic structure $\beta: \Sigma(\rho F) \to \rho F$ on the rational fixpoint similar to
the structure $\alpha: \Sigma(\nu F)\to \nu F$ in~\refeq{diag:alpha}. We will also see that
the new structure $\beta$ on $\rho F$ is a ``restriction'' of
$\alpha$; more precisely the unique coalgebra homomorphism $(\rho F,
r) \to (\nu F, t)$ is also a $\Sigma$-algebra homomorphism. 
In order to proceed we make
\begin{assumption}~\label{ass-sig} We still assume that $F$
  is a finitary functor on the lfp category $\A$. We now assume also
  that $\Sigma: \mathcal{A} \rightarrow \mathcal{A}$ is a 
  \emph{strongly finitary} functor, i.\,e., $\Sigma$ is finitary and it
  preserves finitely presentable objects. We also assume 
  that $\lambda: \Sigma(F \times \id) \rightarrow F(\Sigma + \id)$ is a
  bipointed specification.
  We still write $\Phi$ for the functor in~\refeq{eq:phi}, which lifts
  $\Sigma + \id$ to $\coalg(F)$. 
\end{assumption}

\begin{example} The notion of strongly finitary functor is taken
  from~\cite{AMV03} and we discuss some examples below. 
  \begin{enumerate}[(1)]
  \item The class of strongly finitary functors on $\set$ contains the
    identity functor, all constant functors on finite sets, the finite
    power-set functor $\powf$, and it is closed
    under finite products, finite coproducts and composition. 

  \item From the previous point we see that a polynomial functor
    $\Sigma$ on $\set$ is strongly finitary iff the corresponding
    signature has finitely many operation symbols of 
    finite arity.

  \item The functor $FX = 2 \times X^A$ is strongly finitary iff $A$
    is a finite set. 

  \item The type functor $FX = \real \times X$ of stream systems as
    coalgebras is finitary but not strongly so. However, if we
    consider $F$ as a functor on $\VecR$, then it is strongly
    finitary; in fact, for every finite dimensional real vector space
    $X$, $\real \times X$ is finite dimensional, too.
  \end{enumerate}
\end{example}

First we need the following lemma which states that $\Phi$ is a
finitary functor that restricts to the subcategory of coalgebras with
a finitely presentable carrier. 
\begin{lemma}\label{lm-lift-restr}
The lifting $\Phi$ (a)~is finitary and (b)~restricts to $\coalgf(F)$. 
\end{lemma}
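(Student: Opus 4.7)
The plan is to reduce both parts to statements about $\Sigma + \id$ on $\A$ using the fact that $\Phi$ is a lifting of $\Sigma + \id$ along the forgetful functor $U: \coalg(F) \to \A$. The key background fact I will invoke is standard: $U$ creates all colimits that exist in $\A$. Indeed, given a diagram $D: \mathcal{J} \to \coalg(F)$, say $D(i) = (X_i, f_i)$, whose underlying $\A$-diagram $UD$ has colimit $(C, (c_i)_i)$, the family $(Fc_i \circ f_i)_i$ is a cocone over $UD$ with vertex $FC$ (each connecting map in $D$ is a coalgebra homomorphism), and the resulting mediating morphism gives the unique coalgebra structure $g: C \to FC$ that turns every $c_i$ into a coalgebra homomorphism and makes $(C, g)$ the colimit of $D$ in $\coalg(F)$.

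For part~(a), since $U$ creates and hence preserves colimits, the identity $U\Phi = (\Sigma + \id)U$ implies that $\Phi$ preserves filtered colimits as soon as $\Sigma + \id$ does: the unique coalgebra structure determined on $\text{colim}\, U\Phi D = (\Sigma + \id)\,\text{colim}\, UD$ must agree with that of $\Phi$ applied to the colimit of $D$. It therefore suffices to show that $\Sigma + \id$ is finitary on $\A$. By Assumption~\ref{ass-sig}, $\Sigma$ is finitary; the identity is trivially so; and the pointwise binary coproduct of two finitary functors is finitary because filtered colimits commute with finite coproducts (a general Fubini argument for colimits in any cocomplete category).

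For part~(b), let $(S, f) \in \coalgf(F)$, so that $S$ is finitely presentable. The carrier of $\Phi(S, f)$ is $\Sigma S + S$. Since $\Sigma$ is strongly finitary, $\Sigma S$ is finitely presentable, and since finitely presentable objects in any lfp category are closed under finite colimits (in particular under binary coproducts), $\Sigma S + S$ is finitely presentable. Hence $\Phi(S, f) \in \coalgf(F)$.

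There is no genuine obstacle here; the whole point of the argument is to set up the lifting relationship so that both claims can be pushed down to the underlying lfp category. The only pitfall I want to sidestep is attempting to verify finitariness by a hands-on filtered-colimit computation at the coalgebra level, which becomes unnecessary once creation of colimits by $U$ is made explicit.
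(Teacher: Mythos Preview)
Your proof is correct and follows essentially the same approach as the paper's: both parts are reduced to properties of $\Sigma + \id$ on $\A$ via the lifting identity $U\Phi = (\Sigma + \id)U$ and the fact that $U$ creates colimits, with strong finitariness of $\Sigma$ and closure of finitely presentable objects under finite colimits handling part~(b). Your version is somewhat more explicit (spelling out why $U$ creates colimits and why the comparison cocone is colimiting), but the argument is the same.
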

\begin{proof}
  Ad (a). By assumption, $\Sigma$ is a finitary functor, and so
  $\Sigma + \id: \mathcal{A} \rightarrow \mathcal{A}$ is clearly
  finitary, too. Since the forgetful functor
  $U: \coalg(F) \rightarrow \mathcal{A}$ creates all colimits, it
  follows that $\Phi$ is finitary since $(\Sigma + \id) \cdot U = U
  \cdot \Phi$. 

Ad~(b). Let $(S, f)$ be an object of $\coalgf(F)$.
Then $S$ is finitely presentable, and, since $\Sigma$ is strongly
finitary, $\Sigma S$ is also finitely presentable. Finally, since finitely presentable
objects are clearly closed under finite colimits, $\Sigma S + S$ is finitely
presentable, too. Thus, $\Phi (S, f)$ is an object of $\coalgf(F)$.
\end{proof}
Now in order to use the universal property of $\rho F$ we prove that
the lifting $\Phi$ applied to it is locally finitely presentable:
\begin{lemma}\label{lm-rfp}
The coalgebra $\Phi(\rho F, r)$ is locally finitely presentable.
\end{lemma}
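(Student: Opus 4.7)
The plan is to invoke Theorem~\ref{thm-lfpc-colimit}, which characterises locally finitely presentable coalgebras as precisely the filtered colimits of diagrams in $\coalgf(F)$. So my goal is to exhibit $\Phi(\rho F, r)$ as such a filtered colimit.

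The starting point is the very definition of the rational fixpoint: $(\rho F, r) = \mathrm{colim}\, E$ where $E: \coalgf(F) \hookrightarrow \coalg(F)$ is the inclusion, and this is an essentially small filtered diagram (in fact, $\coalgf(F)$ is closed under finite colimits, hence filtered). I would then apply $\Phi$ to this colimit. By Lemma~\ref{lm-lift-restr}(a), $\Phi$ is finitary, hence preserves this filtered colimit, giving
\[
\Phi(\rho F, r) \;=\; \Phi(\mathrm{colim}\, E) \;=\; \mathrm{colim}\, (\Phi \circ E).
\]
By Lemma~\ref{lm-lift-restr}(b), $\Phi$ restricts to $\coalgf(F)$, so $\Phi \circ E$ factors through the inclusion $\coalgf(F) \hookrightarrow \coalg(F)$. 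Thus $\Phi(\rho F, r)$ is a filtered colimit of a diagram of coalgebras from $\coalgf(F)$, and Theorem~\ref{thm-lfpc-colimit} immediately yields that $\Phi(\rho F, r)$ is locally finitely presentable.

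There is no real obstacle here since all the hard work has been packaged into Lemma~\ref{lm-lift-restr} and Theorem~\ref{thm-lfpc-colimit}; the only subtlety worth spelling out is that the colimit defining $\rho F$ is indeed filtered (which is why finitariness of $\Phi$ is the relevant hypothesis) and that the diagram $\Phi \circ E$ takes values in $\coalgf(F)$ precisely because $\Phi$ restricts to that subcategory. The proof should therefore be just a few lines chaining these two facts together.
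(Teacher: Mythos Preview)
Your proposal is correct and follows essentially the same approach as the paper: use that $(\rho F,r)=\mathrm{colim}\,E$ is a filtered colimit, apply finitariness of $\Phi$ from Lemma~\ref{lm-lift-restr}(a) to commute $\Phi$ past the colimit, use Lemma~\ref{lm-lift-restr}(b) to see that $\Phi\circ E$ lands in $\coalgf(F)$, and conclude via Theorem~\ref{thm-lfpc-colimit}. The paper's proof is exactly this chain of observations, condensed into three sentences.
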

\begin{proof}
  Since $(\rho F, r) = \text{colim } E$ (see~\refeq{eq:E}) and $\Phi$ is
  finitary (Lemma~\ref{lm-lift-restr}(a)), 
  $\Phi(\rho F, r)$ can be obtained as the filtered colimit of the diagram
  $
  \coalgf(F) \stackrel{E}{\longhookrightarrow} 
  \coalg(F) \stackrel{\Phi}{\longrightarrow}
  \coalg(F).
  $
  By Lemma~\ref{lm-lift-restr}(b), this is a diagram of coalgebras
  from $\coalgf(F)$. Therefore, by Theorem~\ref{thm-lfpc-colimit}, $\Phi(\rho F, r)$
  is a locally finitely presentable coalgebra.
\end{proof}
From the above lemma, by the universal property of the rational
fixpoint we obtain
\begin{corollary}\label{cor-alg-rational}
There exists a unique algebra structure $\beta: \Sigma(\rho F) \rightarrow \rho F$
such that the following diagram commutes:
$$
\xymatrix@C+2pc{
\Sigma (\rho F) \ar[r]^-{\Sigma\langle r, id \rangle} \ar[d]_{\beta}
& \Sigma(F (\rho F) \times \rho F) \ar[r]^{\lambda_{\rho F}} 
& F(\Sigma (\rho F) + \rho F) \ar[d]^{F[\beta, id]}\\
\rho F \ar[rr]^r & & F(\rho F)
}
$$
\end{corollary}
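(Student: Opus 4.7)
The plan is to apply the universal property of the rational fixpoint from the theorem quoted in Section~\ref{sec-prel}, which states that $(\rho F, r)$ is the final locally finitely presentable $F$-coalgebra. By Lemma~\ref{lm-rfp}, the coalgebra $\Phi(\rho F, r)$ is locally finitely presentable, so finality yields a unique coalgebra homomorphism $h\colon \Phi(\rho F, r) \to (\rho F, r)$. Since the carrier of $\Phi(\rho F, r)$ is the coproduct $\Sigma(\rho F) + \rho F$, I would split $h$ via the universal property of the coproduct as $h = [\beta, \gamma]$ for some $\beta\colon \Sigma(\rho F) \to \rho F$ and $\gamma\colon \rho F \to \rho F$, and take this $\beta$ as the required algebra structure.

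Next I would verify that $\gamma = \id_{\rho F}$, so that $h = [\beta, \id]$ has the desired form. Inspecting the definition of $\Phi$ in \refeq{eq:phi}, the right-hand injection $\inr\colon \rho F \to \Sigma(\rho F) + \rho F$ is itself a coalgebra homomorphism from $(\rho F, r)$ to $\Phi(\rho F, r)$, because the structure map of $\Phi(\rho F, r)$ restricted to the right summand is $[\lambda_{\rho F}, F\inr] \circ \inr \circ r = F\inr \circ r$. Consequently $\gamma = h \circ \inr$ is a coalgebra endomorphism of $(\rho F, r)$, and by the same finality applied to $(\rho F, r)$ itself this endomorphism must equal the identity.

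With $\gamma = \id$ in hand, I would unfold the coalgebra-homomorphism equation $r \circ h = F h \circ (\text{structure of } \Phi(\rho F, r))$ and precompose it with $\inl\colon \Sigma(\rho F) \to \Sigma(\rho F) + \rho F$. The left-hand side then becomes $r \circ \beta$ and the right-hand side becomes $F[\beta, \id] \circ \lambda_{\rho F} \circ \Sigma\langle r, \id\rangle$, which is precisely the square in the statement. Uniqueness of $\beta$ is immediate: any $\beta'$ making that square commute would make $[\beta', \id]\colon \Phi(\rho F, r) \to (\rho F, r)$ a coalgebra homomorphism, hence equal to $h$ by uniqueness, forcing $\beta' = \beta$. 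I do not foresee any real obstacle, since all the conceptual work has already been done in Lemmas~\ref{lm-lift-restr} and~\ref{lm-rfp}; the only delicate point is the identification of the second component of $h$ with the identity, and this reduces cleanly to finality applied to $(\rho F, r)$ itself.
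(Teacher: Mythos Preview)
Your proposal is correct and follows essentially the same approach as the paper: the paper's proof simply invokes Lemma~\ref{lm-rfp} and finality of $(\rho F,r)$ among locally finitely presentable coalgebras to obtain the unique homomorphism, then states that ``it is again easy to show that its right-hand coproduct component must be the identity,'' leaving the verification and the unfolding of the square to the reader. You have spelled out exactly those details---in particular the argument that $\inr$ is a coalgebra homomorphism so that $\gamma = h \circ \inr$ is an endomorphism of $(\rho F,r)$ and hence the identity by finality---so your write-up is a faithful expansion of the paper's sketch rather than a different route.
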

Indeed, by Lemma~\ref{lm-rfp} and the finality of $\rho F$ as a locally finitely presentable coalgebra
there is a unique coalgebra homomorphism from $\Phi(\rho F,r)$ to $(\rho F,r)$, and it is again easy
to show that its right-hand coproduct component must be the identity, and so its left-hand component is
the desired $\Sigma$-algebra structure $\beta$.


\begin{proposition}\label{prop:hom}
  Let $h: (\rho F, r) \to (\nu F, t)$ be the unique $F$-coalgebra
  homomorphism. Then $h$ is also a $\Sigma$-algebra homomorphism from
  $(\rho F, \beta)$ to $(\nu F, \alpha)$. 
\end{proposition}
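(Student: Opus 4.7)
My plan is to exploit the universal property of the final coalgebra $(\nu F,t)$ together with functoriality of the lifting $\Phi$. Recall that $[\alpha, \id]: \Phi(\nu F, t) \to (\nu F, t)$ and $[\beta, \id]: \Phi(\rho F, r) \to (\rho F, r)$ are the coalgebra homomorphisms obtained by finality in diagram~\refeq{diag:alpha} and Corollary~\ref{cor-alg-rational}, respectively. Since $h$ is a coalgebra homomorphism $(\rho F, r) \to (\nu F, t)$ and $\Phi$ is a functor on $\coalg(F)$, we obtain a coalgebra homomorphism $\Phi h = \Sigma h + h: \Phi(\rho F, r) \to \Phi(\nu F, t)$.

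Now I would form two coalgebra homomorphisms from $\Phi(\rho F, r)$ to $(\nu F, t)$:
\[
\xymatrix@C+1pc{
\Phi(\rho F, r) \ar[r]^-{\Phi h} & \Phi(\nu F, t) \ar[r]^-{[\alpha, \id]} & (\nu F, t)
}
\qquad \text{and} \qquad
\xymatrix@C+1pc{
\Phi(\rho F, r) \ar[r]^-{[\beta, \id]} & (\rho F, r) \ar[r]^-{h} & (\nu F, t).
}
\]
Computed on the carriers $\Sigma(\rho F) + \rho F$, the first composite is $[\alpha \circ \Sigma h,\ h]$ and the second is $[h \circ \beta,\ h]$. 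By finality of $(\nu F, t)$ in $\coalg(F)$, these two coalgebra homomorphisms must be equal, and therefore $\alpha \circ \Sigma h = h \circ \beta$, which is exactly the statement that $h$ is a $\Sigma$-algebra homomorphism from $(\rho F, \beta)$ to $(\nu F, \alpha)$.

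The only step requiring mild care is verifying that both composites are indeed coalgebra homomorphisms into $(\nu F, t)$, but this is immediate from the fact that each factor is a coalgebra homomorphism (in particular, $\Phi h$ is one by functoriality of $\Phi$, as spelled out after~\refeq{eq:phi}). No extra computation is needed, since the desired equation drops out from the uniqueness of morphisms into the final coalgebra; this is the same pattern one would use to show that any coalgebra homomorphism between domains of $\lambda$-bialgebras is automatically an algebra homomorphism.
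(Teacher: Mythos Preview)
Your proof is correct and follows essentially the same approach as the paper's own proof: both form the two composites $h \circ [\beta,\id]$ and $[\alpha,\id] \circ \Phi h$ of coalgebra homomorphisms from $\Phi(\rho F, r)$ into $(\nu F, t)$, compute them on carriers as $[h \circ \beta, h]$ and $[\alpha \circ \Sigma h, h]$, and invoke finality of $(\nu F, t)$ to conclude equality of the left coproduct components.
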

\begin{proof}
  We are to prove the equation $h \o \beta = \alpha \o \Sigma h$. This is
  equivalent to proving 
  \[
  [h \o \beta, h] = [\alpha \o \Sigma h, h]: \Sigma(\rho F) + \rho F \to \nu F,
  \]
  which is established by proving that both sides form coalgebra
  homomorphisms from $\Phi(\rho F, r)$ to $(\nu F, t)$. 
  Indeed, they are both compositions of two coalgebra homomorphisms:
  \[
  [h \o \beta, h] = (\xymatrix@1{
    \Phi(\rho F, r) \ar[r]^-{[\beta, id]} & (\rho F, r) \ar[r]^-h &
    (\nu F, t)
    }), \qquad
  [\alpha \o \Sigma h, h] = (\xymatrix@1{
    \Phi(\rho F, r) \ar[r]^-{\Phi h} & \Phi(\nu F, t)
      \ar[r]^-{[\alpha, id]} & (\nu F, t)
    }).\qedhere
  \]
\end{proof}
As a consequence we obtain the following closure property of $\rho F$:
Suppose that $h$ in the previous proposition is a monomorphism (cf.~Remark~\ref{rem:subcoalg}). Then 
$(\rho F, r)$ is a subcoalgebra of $(\nu F, t)$ and $(\rho F, \beta)$
is a subalgebra of $(\nu F, \alpha)$ via $h$.  

\begin{remark}
  Notice that the results of this section are easily seen to
  generalize from bipointed specifications to the more general
  \emph{coGSOS laws}, i.\,e., natural transformations of the form
  \[
  \lambda: \Sigma C_F \to F(\Sigma + \id),
  \]
  where $C_F$ denotes the cofree comonad on $F$ (see,
  e.\,g.,~\cite{Klin11}). (Observe that the cofree comonad on $F$ is
  given objectwise by assigning to an object $X$ of $\A$ the final
  coalgebra $\nu(F(-)\times X )$.) This is formally dual to the 
  abstract GSOS format we recalled in
  Remark~\ref{rem:GSOS}. coGSOS laws allow to specify important
  operations not captured by bipointed specifications, e.\,g.,~the
  tail operation $\sigma \mapsto \sigma'$ on streams. And in the case of transition system
  specifications (i.\,e., where $FX = \powf(A \times X)$) it is well-known
  that specifications in the so-called safe ntree format are 
  instances of coGSOS laws (see~\cite{TP97}), but it is not known whether every
  coGSOS law arises from a safe ntree specification. We defer a thorough
  treatment of coGSOS laws to future work. 
  
  \takeout{ 
  Here we have restricted ourselves to working with bipointed
  specifications because in many concrete application
  this format is sufficient to specify the desired operations and
  because bipointed specifications closely correspond to transition
  system specifications in the simple GSOS format
  of~\cite{AFV}. However, coGSOS laws do allow to specify important
  operations not captured by bipointed specifications, e.\,g., the
  tail operation $\sigma \mapsto \sigma'$ on streams. We defer a thorough
  treatment of coGSOS laws to future work. } 
\end{remark}

\section{Applications}\label{sec-app}

In this section we consider algebraic operations defined on the rational fixpoint for several
concrete types of systems, as applications of Corollary~\ref{cor-alg-rational}
and Proposition~\ref{prop:hom}. We discuss concrete SOS formats corresponding to bipointed
specifications. There are many such concrete specification formats for similar
distributive laws studied in the literature~\cite{Klin11}, and we can only cover a few examples here. 
For most of these formats it is easy to obtain a restriction to bipointed specifications, so
that our results apply and the obtained specifications define operations which restrict to the
rational fixpoint. Throughout this section we assume that $\Sigma$ is
a signature represented as a strongly finitary polynomial functor on
$\set$. 
To the best of our knowledge, all the results we present in the
corollaries in this section are new. 

\vspace*{-10pt}
\paragraph{Streams.}
Consider the $\set$ functor $FX = \real \times X$ of streams over the
reals. A bipointed specification then is a natural transformation $\lambda$ with components
\begin{equation}\label{eqn-lambda-streams}
\lambda_X: \Sigma(\real \times X \times X) \Rightarrow \real \times (\Sigma X + X).
\end{equation}
We recall from~\cite{Klin11} that these  natural transformations can be expressed in a more convenient SOS format as follows.
A \emph{bipointed stream SOS rule} for an operator $f$ in $\Sigma$ of arity $n$ is a rule
$$
\frac{x_1 \stackrel{r_1}{\rightarrow} x_1' \qquad \ldots \qquad x_n \stackrel{r_n}{\rightarrow} x_n'}
{f(x_1, \ldots, x_n) \stackrel{r}{\rightarrow} t}
$$
where $x_1, \ldots, x_n, x_1', \ldots, x_n'$ is a collection of
pairwise distinct variables, which we call $V$. Further, $t$ is a
variable in V or a term of the form $g(y_1, \ldots, y_m)$ where $g$ is
an $m$-ary operation symbol of $\Sigma$, and $y_i \in V$ for all $1 \leq i \leq m$, and finally $r, r_1, \ldots, r_n \in \mathbb{R}$.
\takeout{
\begin{itemize}
  \item $x_1, \ldots, x_n, x_1', \ldots, x_n'$ is a collection of pairwise distinct variables, which we call $V$
  \item $t$ is a variable in V or a term of the form $g(y_1, \ldots, y_m)$ where $g$ is an operation
  in $\Sigma$ of arity $m$, and $y_i \in V$ for all $1 \leq i \leq m$,
  \item $r, r_1, \ldots, r_n \in \mathbb{R}$
\end{itemize}
}%
We say the above rule is \emph{triggered} by the $n$-tuple $(r_1, \ldots, r_n)$.
A \emph{bipointed stream SOS specification} for the strongly finitary signature $\Sigma$ then is a collection of bipointed stream
SOS rules for $\Sigma$ such that for each operator $f$ in $\Sigma$ and for each sequence of real numbers $r_1, \ldots, r_n$, 
there exists precisely one rule for $f$ triggered by $(r_1, \ldots, r_n)$. Bipointed stream SOS specifications are in
one-to-one correspondence with natural transformations of the above type (\ref{eqn-lambda-streams}). Therefore, by
Proposition~\ref{prop:hom} we have
\begin{corollary}
The operations defined by a bipointed stream SOS specification on the final coalgebra of the $\set$ functor $FX = \mathbb{R} \times X$ restrict to the
rational fixpoint of $F$, i.e., the coalgebra of eventually periodic streams.
\end{corollary}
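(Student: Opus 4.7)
The plan is to treat this corollary as a direct application of the general machinery developed in Sections~\ref{sec-sos} and~\ref{sec-alg}, once we have identified a bipointed stream SOS specification with a genuine bipointed specification in the sense of Definition~\ref{def-bipointed}. So the first step is to justify the one-to-one correspondence asserted in the paragraph above. Given a polynomial functor $\Sigma X = \coprod_{f \in \Sigma} X^{|f|}$, a natural transformation $\lambda_X : \Sigma(\real \times X \times X) \to \real \times (\Sigma X + X)$ is determined, summand by summand, by functions that send a tuple $(r_1, x_1, x_1', \ldots, r_n, x_n, x_n')$ to a pair $(r, t)$ with $t \in \Sigma X + X$. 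Naturality in $X$ forces $r$ to depend only on $(r_1, \ldots, r_n)$ and $t$ to be a term built uniformly from the variables $x_i, x_i'$, which is precisely the data of one bipointed stream SOS rule for $f$ triggered by $(r_1,\ldots,r_n)$; totality of $\lambda_X$ corresponds to the requirement that exactly one such rule exists for each operator and each trigger.

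Second, I would check that all the standing assumptions hold for this instance. The functor $FX = \real \times X$ is finitary on $\set$, the category $\set$ is lfp, and $\Sigma$ is strongly finitary on $\set$ because it is a polynomial functor associated with a strongly finitary signature. Therefore Assumption~\ref{ass-sig} is satisfied, and Corollary~\ref{cor-alg-rational} yields a unique $\Sigma$-algebra structure $\beta : \Sigma(\rho F) \to \rho F$ on the rational fixpoint making the characteristic square commute.

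Third, I would invoke Proposition~\ref{prop:hom}, which tells us that the unique coalgebra homomorphism $h : (\rho F, r) \to (\nu F, t)$ is simultaneously a $\Sigma$-algebra homomorphism from $(\rho F, \beta)$ to $(\nu F, \alpha)$. Since $F$ acts on $\set$, Remark~\ref{rem:subcoalg} gives that $\rho F$ is a subcoalgebra of $\nu F$, so $h$ is a monomorphism, i.e., an inclusion. Consequently the operations $\alpha$ on $\nu F$ induced by $\lambda$ restrict along $h$ to the operations $\beta$ on $\rho F$, which is precisely the closure statement we want. Finally, by Example~\ref{ex:rat}(1) the carrier of $\rho F$ is exactly the set of eventually periodic streams, so the corollary follows.

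The only substantive point requiring care is the correspondence in the first paragraph; everything else is a bookkeeping check of hypotheses followed by direct citation of Corollary~\ref{cor-alg-rational}, Proposition~\ref{prop:hom}, Remark~\ref{rem:subcoalg}, and Example~\ref{ex:rat}(1). Since the correspondence between rule-based formats and natural transformations is standard and already spelled out in the text, the corollary is essentially an instantiation of the abstract theorem.
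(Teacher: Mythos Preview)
Your proposal is correct and follows exactly the paper's own approach: the paper simply notes the one-to-one correspondence between bipointed stream SOS specifications and natural transformations of type~\refeq{eqn-lambda-streams} and then invokes Proposition~\ref{prop:hom}. Your write-up merely makes explicit the intermediate ingredients (verification of Assumption~\ref{ass-sig}, Corollary~\ref{cor-alg-rational}, Remark~\ref{rem:subcoalg}, and Example~\ref{ex:rat}(1)) that the paper leaves implicit in the one-line justification preceding the corollary.
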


As an example consider the well-known \emph{zip} (or \emph{merge}) operation, 
which takes two streams and returns a new stream which alternates between the two
given arguments. The standard definition of \emph{zip} can be given as a bipointed
stream SOS rule:
$$
\frac{\sigma \stackrel{r_1}{\rightarrow} \sigma' ~~~~ \tau \stackrel{r_2}{\rightarrow} \tau'}
{\mathit{zip}(\sigma, \tau) \stackrel{r_1}{\rightarrow} \mathit{zip}(\tau, \sigma')}
$$
A direct consequence of the above corollary is the basic insight that for any two 
streams $\sigma$ and $\tau$ which are eventually periodic, $\mathit{zip}(\sigma, \tau)$ 
is again eventually periodic.

\begin{remark}
  \begin{enumerate}[(1)]
  \item Another way of specifying operations on streams is using
    \emph{behavioural differential equations}~\cite{Rutten05}
    (cf.~Example~\ref{ex-gsos-op}). In fact the above bipointed stream
    specifications also correspond precisely to behavioural
    differential equations in which each of the derivatives is
    restricted to be either a variable or a single operator applied to
    variables (precisely as $t$ in the definition of bipointed stream
    SOS rules).  Thus, such differential equations define operations
    which restrict to eventually periodic streams as well.
  \item If we consider $FX = \real \times X$ as a functor on $\VecR$
    then bipointed specifications are natural transformations $\lambda$
    where $\Sigma$ is a functor on $\VecR$ and where the components
    $\lambda_X$ in~\refeq{eqn-lambda-streams} are linear maps. By
    Proposition~\ref{prop:hom} we obtain that operations defined by a
    bipointed specification on $\nu F$, the final coalgebra of all
    streams, restrict to the rational fixpoint $\rho F$ formed by all
    rational streams. An example of such an operation is
    the above specification of $zip$. Consequently, we obtain that
    rational streams are closed under $zip$. 
  \end{enumerate}
\end{remark}

\vspace*{-10pt}
\paragraph{Labelled transition systems.}
Recall from Example~\ref{ex:coalg}(3) that labelled transition systems are coalgebras for the functor 
$FX=\powf(A \times X)$ on $\set$. 
In this case a bipointed specification for a strongly finitary signature $\Sigma$ is a natural transformation with components
\begin{equation}\label{eq:lts:dl}
\lambda_X: \Sigma(\mathcal{P}_f(A \times X) \times X) \Rightarrow
\mathcal{P}_f(A \times (\Sigma X + X)).
\end{equation}
This corresponds to a restricted ``flat'' version of the well-known
GSOS format~\cite{BIM95}, where on the right-hand side of the
transition in the conclusions of a rule there may only be a variable
or single operation symbol applied to variables in lieu of an
arbitrary term. %
For a strongly finitary signature, this is precisely the simple GSOS
format of~\cite{AFV}.
Indeed, following the presentation in~\cite{Klin11}, we define a \emph{bipointed LTS SOS rule} for an operator $f$ in $\Sigma$ of arity $n$ as 
\begin{equation}
\frac{\{x_{i_j} \stackrel{a_j}{\rightarrow} y_j\}_{j = 1..m} \qquad \{x_{i_k} \stackrel{b_k}{\not \rightarrow}\}_{k = 1..l}}
{f(x_1, \ldots, x_n) \stackrel{c}{\rightarrow} t}
\end{equation}
where $m$ is the number of positive premises and $l$ is the number of negative premises. The variables $x_1, \ldots, x_n, y_1, \ldots, y_m$
are again pairwise distinct; let $V$ denote the set of these
variables. Then $t$ is either a variable in $V$ or a flat term
$g(z_1, \ldots, z_p)$, where $g$ is an $p$-ary operation symbol in
$\Sigma$ and $z_1, \ldots, z_p \in V$. 
Finally $a_1, \ldots, a_m, b_1, \ldots, b_l, c \in A$ are labels. The above rule
is \emph{triggered} by an $n$-tuple $(E_1, \ldots, E_n)$, where each
$E_i \subseteq A$, 
if for each $i = 1..n$ we have $a_j \in E_{i_j}$ for all $j = 1..m$ and $b_k \not \in E_{i_k}$ for all $k=1..l$. A 
\emph{bipointed LTS SOS specification} then is a collection of rules of the above
type such that for each operator $f$ in $\Sigma$, each $c \in A$ and
each $n$-tuple $\bar{E} = (E_1, \ldots, E_n)$ of sets of labels, 
there are finitely many rules for $f$ with $c$ as the conclusion label that are triggered  by $\bar{E}$. 
Bipointed specifications for labelled transition systems (\ref{eq:lts:dl}) are in one-to-one correspondence with bipointed LTS SOS specifications.
So by Proposition~\ref{prop:hom} we have
\begin{corollary}\label{cor-lts-closure1}
The operations defined by a bipointed LTS SOS specification on the final coalgebra of the $\set$ functor $FX = \powf(A \times X)$ restrict to the
rational fixpoint of $F$, i.e.,~the coalgebra of all \emph{finite}
labelled transition systems modulo the largest bisimulation.
\end{corollary}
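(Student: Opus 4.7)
The plan is to chain together the results of Section~\ref{sec-alg} with the one-to-one correspondence that is asserted just before the statement. First, I would invoke this bijection to replace the given bipointed LTS SOS specification by a natural transformation $\lambda$ of type~\refeq{eq:lts:dl}, which is a bipointed specification in the sense of Definition~\ref{def-bipointed}. The ambient setting matches Assumption~\ref{ass-sig}: the category $\set$ is lfp, $F = \powf(A \times -)$ is a finitary set functor (as $\powf$ is finitary and composition with the finitary functor $A \times -$ preserves this property), and $\Sigma$ is strongly finitary by the blanket assumption made at the beginning of Section~\ref{sec-app}.

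Next, I would apply Corollary~\ref{cor-alg-rational} to $\lambda$ to obtain a unique $\Sigma$-algebra structure $\beta\colon \Sigma(\rho F) \to \rho F$ that realizes the specification on the rational fixpoint. Proposition~\ref{prop:hom} then tells us that the unique coalgebra homomorphism $h\colon (\rho F, r) \to (\nu F, t)$ is also a $\Sigma$-algebra homomorphism from $(\rho F, \beta)$ to $(\nu F, \alpha)$, where $\alpha$ is the algebra on the final coalgebra induced by $\lambda$. To conclude that the operations $\alpha$ on $\nu F$ actually \emph{restrict} to $\rho F$, I would apply Remark~\ref{rem:subcoalg}: since we work on $\set$, the map $h$ is injective, so $\rho F$ embeds as a subcoalgebra of $\nu F$, and $\beta$ is literally the restriction of $\alpha$ along that embedding.

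Finally, the identification of $\rho F$ with the coalgebra of all finite labelled transition systems modulo the largest bisimulation is precisely the content of Example~\ref{ex:rat}(3), which I would cite to match the phrasing in the statement of the corollary.

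The only genuinely non-trivial step in this pipeline is the one-to-one correspondence used at the outset, which the paper states without proof. Spelling it out, the hard part is checking that an LTS SOS specification uniquely determines the value of $\lambda_X(\sigma(s_1, \ldots, s_n))$ for each operation symbol $\sigma$ of arity $n$ and each input $(s_1, \ldots, s_n) \in (\powf(A\times X)\times X)^n$: one reads off the triggered rules from the outgoing label sets $E_i = \pi_0^{-1}[\{\text{labels of }s_i\}]$ and assembles the resulting transitions, with the ``finitely many triggered rules'' condition ensuring that the output indeed lies in $\powf(A \times (\Sigma X + X))$ rather than the unrestricted powerset, and naturality following because only elementary set-theoretic constructions on the premises are used. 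This is routine bookkeeping but is the only part of the argument not already packaged as a black box earlier in the paper.
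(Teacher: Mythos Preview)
Your proposal is correct and follows the same approach as the paper: the paper simply says that the one-to-one correspondence with bipointed specifications of type~\refeq{eq:lts:dl} holds and then invokes Proposition~\ref{prop:hom}, with the restriction statement unpacked via the paragraph following that proposition (which is exactly your appeal to Remark~\ref{rem:subcoalg}). Your additional verification of Assumption~\ref{ass-sig} and the citation of Example~\ref{ex:rat}(3) are natural elaborations of what the paper leaves implicit.
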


As an example we recall the semantics of the operators of Milner's CCS~\cite{milner}, which forms
a bipointed LTS SOS specification:
$$
\frac{}{a.P \stackrel{a}{\rightarrow} P}
~~~~
\frac{P_1 \stackrel{a}{\rightarrow} P_1'}
{P_1 + P_2 \stackrel{a}{\rightarrow} P_1'}
~~~~
\frac{P_2 \stackrel{a}{\rightarrow} P_2'}
{P_1 + P_2 \stackrel{a}{\rightarrow} P_2'}
~~~~
\frac{P \stackrel{a}{\rightarrow} P' }
{P \setminus L \stackrel{a}{\rightarrow} P' \setminus L}
(a, \bar{a} \not \in L)
$$
$$
~~~~
\frac{P_1 \stackrel{a}{\rightarrow} P_1'}
{P_1 || P_2 \stackrel{a}{\rightarrow} P_1' || P_2}
~~~~
\frac{P_2 \stackrel{a}{\rightarrow} P_2'}
{P_1 || P_2 \stackrel{a}{\rightarrow} {P_1 || P_2'}}
~~~~
\frac{P_1 \stackrel{a}{\rightarrow} P_1' ~~~ P_2 \stackrel{\bar{a}}{\rightarrow} P_2'}
{P_1 || P_2 \stackrel{\tau}{\rightarrow} {P_1' || P_2'}}
~~~~
\frac{P \stackrel{a}{\rightarrow} P'}
{P[\oldrho] \stackrel{\oldrho(a)}{\rightarrow} P'[\oldrho]}
$$
Note that in order for the signature corresponding to these operations to be strongly finitary, 
the set of actions $A$ must be finite. Then, by the above
Corollary~\ref{cor-lts-closure1}, finite-state processes are closed
under all of the above operations.

\begin{remark}
  Aceto~\cite{aceto94} proved (see~\cite[Theorem~5.28]{AFV}) that for
  a simple GSOS specification the induced transition system on the
  process terms is regular, i.\,e., for every closed process term $P$
  the transition system giving $P$ its operational semantics has
  finitely many states. Note that this result is not a direct
  consequence of our results in Section~\ref{sec-alg}. In fact, the
  transition systems induced by a (simple) GSOS specification is
  (generalized by) the \emph{operational model} of Turi and
  Plotkin~\cite{TP97} for the corresponding abstract GSOS
  specification; this operational model is the initial
  $\Sigma$-algebra $\mu \Sigma$ equipped with the $F$-coalgebra
  structure induced by the abstract GSOS specification. The
  corresponding generalization of Aceto's result then states that for
  a bipointed specification $\lambda$ the induced 
  $F$-coalgebra on $\mu \Sigma$ is locally finitely presentable. We shall
  state and prove this result in future work.
\end{remark}

\vspace*{-10pt}
\paragraph{Non-deterministic automata.}
Recall from Example~\ref{ex:coalg}(4) that non-deterministic automata are coalgebras for the $\set$ functor $FX = 2\times (\powf X)^A$. 
Bipointed specifications for this functor instantiate to natural transformations with components
\begin{equation}\label{eq:nondet}
  \lambda_X: \Sigma(2 \times \mathcal{P}_f(X)^A \times X) \Rightarrow 2
  \times \mathcal{P}_f(\Sigma X + X)^A.
\end{equation}
We are not aware of an existing SOS format for non-deterministic automata corresponding precisely to these 
natural transformations, which we call \emph{bipointed NDA specifications}. However, it is not hard to devise 
a format based on the above LTS SOS specifications, such that each specification gives rise to a bipointed
NDA specification, but not necessarily vice versa, i.e.,~an incomplete format. 
Define an \emph{output rule}
for an operator $f$ in $\Sigma$ of arity $n$ as
\begin{equation}
\frac{\{x_{i_j}\downarrow\}_{j=1..k}}
{f(x_1, \ldots, x_n)\downarrow}
\end{equation}
where $k \leq n$. The above output rule is \emph{triggered} by an
$n$-tuple $(o_1, \ldots, o_n) \in 2^n$ provided that for all $j$, $o_{i_j} = 1$ iff
$x_{i_j} \downarrow$ is in the premise of the rule.
Intuitively, such a rule specifies that $f(x_1, \ldots, x_n)\downarrow$, meaning that 
$f(x_1, \ldots, x_n)$ is a final state, whenever each of its arguments 
$x_{i_j}$ are final, and all of the other arguments are not final. Notice that
one way to extend this format would be to make the transitions also depend on the
output of the arguments; for technical convenience and lack of space we do not discuss such extensions here.
A \emph{bipointed NDA SOS specification} is a bipointed LTS SOS specification together with
a collection of output rules such that for each operator $f$ and for each
$n$-tuple $\bar{o}=(o_1, \ldots, o_n) \in 2^n$, there is at most one output rule triggered by $f$ and
$\bar{o}$. 
Any bipointed NDA SOS specification is easily seen to give rise to a bipointed NDA specification (\ref{eq:nondet}). By
Proposition~\ref{prop:hom} we now have
\begin{corollary}\label{cor-nda-closure}
The operations defined by a bipointed NDA (SOS) specification on the final coalgebra of the $\set$ functor $FX = 2 \times \powf(X)^A$,
where $A$ is a finite set, restrict to the rational fixpoint of $F$.
\end{corollary}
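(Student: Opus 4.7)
The plan is to invoke Proposition~\ref{prop:hom} (combined with Remark~\ref{rem:subcoalg} to get closure, not just the homomorphism property), so the only real work is converting a bipointed NDA SOS specification into a bipointed specification in the sense of Definition~\ref{def-bipointed}. First I would check that the assumptions of Section~\ref{sec-alg} are met: since $A$ is finite, the functor $(-)^A$ is finitary, and together with the finitary $\powf$, the constant $2$, and closure of finitary set functors under finite products, $FX = 2 \times \powf(X)^A$ is finitary on $\set$; by hypothesis $\Sigma$ is a strongly finitary polynomial functor, so Assumption~\ref{ass-sig} is in force. Moreover, since we work on $\set$, Remark~\ref{rem:subcoalg} guarantees that the canonical $h : \rho F \to \nu F$ is monic, so a restriction of operations along $h$ makes literal sense.

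Next I would define the component $\lambda_X$ of~\refeq{eq:nondet} operator-wise. For $\sigma \in \Sigma$ of arity $n$ and an input $\sigma\bigl((o_1,t_1,x_1),\ldots,(o_n,t_n,x_n)\bigr)$ in $\Sigma(2 \times \powf(X)^A \times X)$, let $E_i = \{a \in A \mid t_i(a) \neq \emptyset\}$. Then:
\begin{itemize}
\item The $2$-component of $\lambda_X(\ldots)$ is $1$ iff the unique (if any) output rule for $\sigma$ triggered by $(o_1,\ldots,o_n)$ fires.
\item For each $c \in A$, the $c$-component in $\powf(\Sigma X + X)$ is the finite set of all interpretations $\llbracket t \rrbracket \in \Sigma X + X$ obtained from bipointed LTS rules for $\sigma$ with conclusion label $c$ triggered by $(E_1,\ldots,E_n)$, where one ranges over all instantiations of the positive premises $x_{i_j} \stackrel{a_j}{\to} y_j$ with $y_j \in t_{i_j}(a_j)$, and variables $x_i$ (resp.~$y_j$) in $t$ are interpreted as $x_i \in X$ (resp.~$y_j \in X$), with flat terms $g(z_1,\ldots,z_p)$ landing in the $\Sigma X$ summand.
\end{itemize}
Finiteness of this union is immediate: the NDA format allows only finitely many rules per $(\sigma,c,\bar E)$, each $t_{i_j}(a_j)$ is finite, so the total number of instantiations is finite.

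Then I would verify naturality: for $g : X \to Y$, both routes around the naturality square apply $g$ (or $\Sigma g + g$) pointwise to the state variables inside the collected terms, and the trigger conditions depend only on the supports $E_i$ and the outputs $o_i$, which are preserved by the action of $F g$ on $(o_i, t_i)$ in the obvious sense (note $(\powf g)(t_i(a))$ is empty iff $t_i(a)$ is empty). This is a routine check that I would only sketch in the paper. Having a genuine bipointed specification $\lambda$, Corollary~\ref{cor-alg-rational} yields $\beta : \Sigma(\rho F) \to \rho F$ and Proposition~\ref{prop:hom} asserts that the induced $\alpha : \Sigma(\nu F) \to \nu F$ agrees with $\beta$ along $h$; combined with monicity of $h$ this gives exactly the desired restriction.

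The main obstacle is cleanly packaging the construction of $\lambda_X$ from the SOS rules — in particular handling negative premises (non-triggering) and the choice of successors for positive premises — in a way that is manifestly natural and produces finitely supported values. Once $\lambda$ is in hand, everything else is a direct appeal to the already-established machinery of Section~\ref{sec-alg}.
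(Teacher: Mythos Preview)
Your proposal is correct and follows exactly the paper's approach: observe that a bipointed NDA SOS specification yields a natural transformation of type~\refeq{eq:nondet}, then invoke Proposition~\ref{prop:hom} (together with Remark~\ref{rem:subcoalg} for monicity of $h$). The paper treats the construction of $\lambda$ from the rules as ``easily seen'' and states the corollary without further argument, whereas you spell out the translation, the finiteness check, and the naturality verification in detail---but the underlying strategy is identical.
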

Besides inducing an algebra structure $\alpha:\Sigma(\nu F) \to \nu F$
that restricts to $\rho F$, a bipointed specification as
in~\refeq{eq:nondet} also induces an algebra on formal languages,
i.\,e., $\tilde\alpha: \Sigma (\nu G) \to \nu G$ for $GX = 2 \times
X^A$ on $\set$. To see this recall from Examples~\ref{ex:coalg}(4)
and~\ref{ex:rat}(4) the descriptions of $\nu F$ (and $\rho F$) as 
(rational) strongly extensional trees. Now consider the following map $s: \nu G \to
\nu F$: it takes a formal language $L$ and first interprets its
characteristic map $A^* \to 2$ as a complete ordered
$|A|$-ary tree $t_L$ with nodes labelled in $2$; the strongly
extensional tree $s(L)$ is then obtained by forgetting the order on
the children of every node of $t_L$ and labelling the outgoing edges
of every node with the corresponding letter from $A$. So $s(L)$ has
the same shape as $t_L$, and every node of $s(L)$ has for every $a \in
A$ precisely one $a$-labelled edge to a successor node. 
Secondly, let $q: \nu F \to \nu G$ be the map that assigns to every
strongly extensional tree $t$ in $\nu F$ its corresponding formal
language of all words given by paths from the root of $t$ to a node
labelled by $1$. 
\takeout{
given a formal language $L: \Sigma^* \to 2$, consider the
countable $F$-coalgebra $\langle L, t\rangle: \Sigma^* \to 2 \times
(\powf\Sigma^*)^A$ with $t(w)(a) = \{wa\}$ and form the quotient $Q$
modulo the largest bisimulation. Then $Q$ is a subcoalgebra of $\nu F$ and we define
$s(L)$ to be the equivalence class in $Q \subseteq \nu F$ of the
empty word in the $F$-coalgebra $(\Sigma^*, \langle L,
t\rangle)$. Secondly, let $q: \nu F \to \nu G$ be the map that assigns 
to every equivalence class $[s]$ of a state in the coproduct of all
countable non-deterministic $F$-coalgebras the language accepted by
the state $s$; this is well-defined since bisimilar states always accept the
same language. 
} %
Clearly, we have $q \o s = id_{\nu G}$. Now define
\[
\tilde \alpha = (\xymatrix@1{
\Sigma(\nu G) \ar[r]^-{\Sigma s} & \Sigma(\nu F) \ar[r]^-{\alpha} &
\nu F \ar[r]^-{q} & \nu G
}).
\]
\takeout{
Observe that for a regular language the above quotient $Q$ is
finite, and $q$ maps equivalence classes of states of finite automata to regular
languages. Thus, $s$ and $q$ restrict to the corresponding rational
fixpoints and we have
} %
Observe that $s$ maps a regular language to a regular tree in $\nu
F$, and $q$ maps a regular tree in $\nu F$ to a regular
language. Thus, $s$ and $q$ restrict to the corresponding rational
fixpoints and we have
\begin{corollary}\label{cor-reg-closure}
  The set of regular languages over a finite alphabet $A$ is closed under any operation 
  defined in a bipointed NDA (SOS) specification.
\end{corollary}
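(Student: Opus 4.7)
The plan is to derive the corollary directly from Corollary~\ref{cor-nda-closure} together with two observations about the maps $s$ and $q$, namely that each of them restricts to the rational fixpoints of its respective source and target.

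First I would show that $s[\rho G] \subseteq \rho F$. Every regular language $L$ is accepted by some finite deterministic automaton $(Q,o,\delta)$ with a distinguished state $q_0$. Viewed as a non-deterministic automaton, this is a finite $F$-coalgebra with structure $q \mapsto (o(q), \lambda a.\{\delta(q,a)\})$, and the unique coalgebra homomorphism into $\nu F$ sends $q_0$ to a tree every node of which has for each $a\in A$ exactly one $a$-successor, labelled so as to reflect $o$; by definition of $s$ this tree is precisely $s(L)$. Hence $s(L)$ lies in the image of a finite $F$-coalgebra, i.\,e., $s(L)\in\rho F$.

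Next I would show that $q[\rho F]\subseteq\rho G$. A rational strongly extensional tree $t\in\rho F$ arises as the image of some state $x_0\in S$ under the unique coalgebra homomorphism from a finite $F$-coalgebra $(S,f)$ into $\nu F$. The classical subset construction then yields a $G$-coalgebra with carrier $\powf(S)$, output $o(X) = \bigvee_{x\in X}\pi_0(f(x))$ and transition $\delta(X,a) = \bigcup_{x\in X}\pi_1(f(x))(a)$. Since $S$ is finite, so is $\powf(S)$; the language accepted by the state $\{x_0\}$ of this finite DFA is precisely $q(t)$, and hence $q(t)$ is regular.

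Combining these two observations with Corollary~\ref{cor-nda-closure}, which gives $\alpha[\Sigma(\rho F)]\subseteq\rho F$, the chain
\[
\Sigma(\rho G) \xrightarrow{\Sigma s} \Sigma(\rho F) \xrightarrow{\alpha} \rho F \xrightarrow{q} \rho G
\]
shows that $\tilde\alpha = q \o \alpha \o \Sigma s$ restricts to an operation on $\rho G$, as required. Note that $\Sigma s$ indeed lands in $\Sigma(\rho F)$ because $\Sigma$ is a polynomial set functor and therefore preserves inclusions. The main obstacle in this plan is the verification that $q$ restricts to $\rho G$, which is where one genuinely uses a coalgebraic counterpart of the subset construction; by contrast, the restriction of $s$ is immediate from the fact that DFAs are special NDAs, and the restriction of $\alpha$ is already established.
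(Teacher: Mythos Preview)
Your proposal is correct and follows essentially the same route as the paper: define $\tilde\alpha = q\circ\alpha\circ\Sigma s$, argue that $s$ and $q$ restrict to the respective rational fixpoints, and invoke Corollary~\ref{cor-nda-closure} for the restriction of $\alpha$. The paper merely asserts the two restrictions (``$s$ maps a regular language to a regular tree'' and ``$q$ maps a regular tree to a regular language''), whereas you spell them out via the embedding of DFAs into NDAs and the subset construction, but the argument is otherwise identical.
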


More precisely, the above algebra structure $\tilde\alpha: \Sigma(\nu G) \to
\nu G$ restricts to an algebra structure $\tilde\beta: \Sigma(\rho G) \to
\rho G$ on the rational fixpoint (i.\,e., on regular languages) with $\tilde\beta = q' \o \beta \o
s'$, where $q'$, $\beta$ and $s'$ are the restrictions of $q$,
$\alpha$ and $s$, respectively, to the rational fixpoints $\rho F$ and
$\rho G$. 

\takeout{ 
\begin{remark}\label{rem-alg}
  The above reasoning defining the algebra structure $\tilde\alpha$ on
  $\nu G$ that restricts to the rational fixpoint $\rho G$ can be
  performed much more generally in the setting of the 
  generalized power-set construction of~\cite{bbrs_fsttcs} (see
  also~\cite{BMS12}). This will be presented in future work.
\end{remark}} 

Given two words $w$ and $v$, the \emph{shuffle} of $w$ and $v$, denoted $w \bowtie v$, is the set of words obtained
by arbitrary interleavings of $w$ and $v$~\cite{Shallit08}. For example, $ab \bowtie c = \{abc, acb, cab\}$. The shuffle
of two languages $L_1$ and $L_2$ is the pointwise extension: $L_1 \bowtie L_2 = \bigcup_{w \in L_1, v \in L_2} w \bowtie v$.
The shuffle operator can be defined in terms of a bipointed NDA SOS specification as follows:
$$
\frac{s \stackrel{a}{\rightarrow} s'}
{s \bowtie t \stackrel{a}{\rightarrow} s' \bowtie t}
\qquad
\frac{t \stackrel{a}{\rightarrow} t'}
{s \bowtie t \stackrel{a}{\rightarrow} s \bowtie t'}
\qquad
\frac{s \downarrow \quad t \downarrow}
{(s \bowtie t) \downarrow}
$$
By Corollary~\ref{cor-nda-closure}, this operation restricts to the rational fixpoint
of non-deterministic automata, and by Corollary~\ref{cor-reg-closure} we obtain the
fact that regular languages are closed under shuffle.

The \emph{perfect shuffle} of two words $w$ and $v$ of the same length is defined as the alternation between 
the two words, reminiscent of the \emph{zip} operation on streams
discussed above~\cite{Shallit08}. The operation assigning to two
formal languages the language of all perfect shuffles of their words can also
easily be defined as a bipointed specification; in fact it can be
defined using a bipointed specification w.r.t.~the type functor $G$ of deterministic automata.

\takeout{
\paragraph{Probabilistic transition systems}\smnote{Maybe ``future work''?}
(We may include probabilistic transition systems.. but then, it may become a bit much.
Different notions of parallel composition: see p. 148 of Bartels' thesis.)
}

\vspace*{-10pt}
\paragraph{Weighted transition systems.}

Recall from Example~\ref{ex:coalg}(5) that weighted transition systems are coalgebras for the functor 
$FX = (\mathcal{F}_{\M}X)^A$ on $\set$; here, we assume $A$ to be finite.
In this case a bipointed specification is a natural transformation with components
\begin{equation}\label{eq:wts-dl}
\lambda_X: \Sigma((\mathcal{F}_{\M}X)^A \times X) \Rightarrow
(\mathcal{F}_{\M}(\Sigma X + X))^A.
\end{equation}
We call these natural transformations \emph{bipointed WTS specifications}.
A general GSOS format for weighted transition systems is given in~\cite{Klin09}. We restrict it to bipointed specifications
as follows. A \emph{bipointed WTS SOS rule} for an operator $f$ in $\Sigma$ of arity $n$ is defined as 
\begin{equation}
\frac{\{x_{i_j} \stackrel{a_j, u_j}{\longrightarrow} y_j\}_{j = 1..m} \qquad \{x_i \stackrel{a}{\Rightarrow} w_{a,i}\}_{a \in D_i, i = 1..n}}
{f(x_1, \ldots, x_n) \xrightarrow{c,~\beta(u_1, \ldots, u_k)} t}
\end{equation}
where $m$ is the number of weighted transitions in the premise. The variables $x_1, \ldots, x_n, y_1, \ldots, y_m$
are again pairwise distinct; let $V$ be the set consisting of these variables. Then $t$ is either a variable in $V$ or a flat term
$g(z_1, \ldots, z_p)$, where $g$ is an $p$-ary operation symbol in
$\Sigma$ and $z_1, \ldots, z_p \in V$.
Further $D_i \subseteq A$ is a subset of labels for which the \emph{total weight} of the
outgoing transitions from $x_i$ is specified by $w_{a,i}$. 
Finally $a_1, \ldots, a_m, c \in A$ are labels, $u_1, \ldots, u_m$ are weight variables, and 
$\beta: \M^n \rightarrow \M$ is a multi-additive function. 
A bipointed WTS SOS specification then is a collection of rules of the
above type such that only finitely many rules share the same operator
$f$ in the source, the same label $c$ in the conclusion, and the same
partial function from $\{1, \ldots , n \} \times A$ to $\M$ arising
from their sets of total weight premises~\cite{Klin09}.  Each
bipointed WTS SOS specification induces a distributive law as in
(\ref{eq:wts-dl}) (but the converse does not hold,
see~\cite{Klin09}). So by Proposition~\ref{prop:hom} we have
\begin{corollary}\label{cor-lts-closure}
The operations defined by a bipointed WTS (SOS) specification on the final coalgebra of the $\set$ functor $FX = (\mathcal{F}_{\M}X)^A$
where $A$ is a finite set, restrict to the
rational fixpoint of $F$, i.e., the coalgebra of all \emph{finite}
weighted transition systems modulo weighted bisimilarity. 
\end{corollary}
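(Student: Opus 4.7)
The plan is to verify this as an instance of Proposition~\ref{prop:hom}, following exactly the same template as Corollaries~\ref{cor-lts-closure1} and~\ref{cor-nda-closure}. First I would check the standing assumptions of Section~\ref{sec-alg}: the signature functor $\Sigma$ is assumed strongly finitary (a polynomial functor on $\set$ arising from a signature with finitely many operation symbols of finite arity), and the behaviour functor $FX = (\mathcal{F}_{\M}X)^A$ is finitary on $\set$ because $\mathcal{F}_{\M}$ is finitary (noted in the preliminaries) and, since $A$ is finite, the functor $(-)^A$ is finitary, so the composition is as well.

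Second, the main content is the correspondence, asserted just before the statement following~\cite{Klin09}, between bipointed WTS SOS specifications and natural transformations of the form~\refeq{eq:wts-dl}. Accepting this, any bipointed WTS SOS specification yields a bipointed specification $\lambda$ in the sense of Definition~\ref{def-bipointed}. Then Corollary~\ref{cor-alg-rational} produces the desired $\Sigma$-algebra structure $\beta : \Sigma(\rho F) \to \rho F$, and Proposition~\ref{prop:hom} shows that the unique coalgebra homomorphism $h : (\rho F, r) \to (\nu F, t)$ is also a $\Sigma$-algebra homomorphism from $(\rho F, \beta)$ to $(\nu F, \alpha)$.

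Third, to turn this into the informal statement that the operations on $\nu F$ \emph{restrict} to $\rho F$, I would invoke Remark~\ref{rem:subcoalg}: since $\A = \set$, the rational fixpoint $\rho F$ is a subcoalgebra of $\nu F$, so $h$ is a monomorphism. Combined with the $\Sigma$-algebra homomorphism property, this means $(\rho F, \beta)$ embeds as a $\Sigma$-subalgebra of $(\nu F, \alpha)$ via $h$, which is exactly what ``restrict'' means. The characterisation of $\rho F$ as the coalgebra of finite WTS's modulo weighted bisimilarity is provided by Example~\ref{ex:rat}(5), matching the description in the statement.

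The only nontrivial technical ingredient is the translation step, i.e.~verifying that each bipointed WTS SOS specification really induces a natural transformation~\refeq{eq:wts-dl}: one must check that the multi-additive function $\beta$ aggregates weights of matching positive premises coherently, that the total-weight premises are reflected by the action of $\mathcal{F}_{\M}$ on morphisms, and that the finiteness condition on rules sharing the same source operator, conclusion label and partial total-weight function ensures that the resulting element of $\mathcal{F}_{\M}(\Sigma X + X)$ really has finite support. This is precisely what~\cite{Klin09} establishes for the broader WTS GSOS format, and the bipointed restriction inherits the argument essentially verbatim; hence no substantial new work beyond citing that construction is required.
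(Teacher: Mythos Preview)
Your proposal is correct and follows essentially the same approach as the paper: the paper derives the corollary immediately from Proposition~\ref{prop:hom} after observing that each bipointed WTS SOS specification induces a natural transformation of the form~\refeq{eq:wts-dl} (citing~\cite{Klin09}). Your write-up is simply a more explicit unpacking of the same argument, with the added care of checking the standing assumptions and invoking Remark~\ref{rem:subcoalg} to justify the word ``restrict''. One small wording quibble: you speak of a ``correspondence'' between bipointed WTS SOS specifications and the natural transformations~\refeq{eq:wts-dl}, but the paper explicitly notes the converse fails; only the forward direction (SOS specification $\Rightarrow$ natural transformation) is asserted or needed, and you use only that direction in the actual argument.
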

All of the examples of operations on WTS's from~\cite{Klin09} are bipointed specifications, from which it follows
that the rational fixpoint is closed under those operations. We recall here the priority operator. To this
end we consider the weights to be in $\real^{+\infty}$, which is the set consisting of all positive reals augmented with infinity 
(denoted $\infty$). By taking \emph{minimum} as the sum operation, this forms a monoid with $\infty$ as the unit. The unary 
operation $\partial_{ab}$ is defined by the rules
$$
\frac{x \stackrel{a}{\Rightarrow} w \quad x \stackrel{b}{\Rightarrow} v \quad x \stackrel{a,u}{\rightarrow} x'}
{\partial_{ab}(x) \stackrel{a,u}{\longrightarrow} \partial_{ab}(x')}
\qquad
\frac{x \stackrel{a}{\Rightarrow} v \quad x \stackrel{b}{\Rightarrow} w \quad x \stackrel{b,u}{\rightarrow} x'}
{\partial_{ab}(x) \stackrel{b,u}{\longrightarrow} \partial_{ab}(x')}
$$
for all $w \leq v \in \real^{+\infty}$. The operator $\partial_{ab}$ preserves only the $a$-transitions 
if the minimum weight of all $a$-transitions is less than or equal to the minimum of all outgoing $b$-transitions,
and vice versa. 

\takeout{
\paragraph{Systems on other categories}
(A discussion on, for example, determinstic automata on join-semilattices and
on vector spaces. These have very nice rational fixpoints, but the associated
natural transformations are a bit more nasty, and a little less expressive. 
But at least it might be interesting to say something about this, maybe
as future work. For example, would there be concrete formats which guarantee
that the arrows of the induced natural transformation are homomorphisms? 
This is also a bit outside the scope of this paper i would say. )
}

\section{Conclusions and future work}\label{sec-conc}

In this paper we have presented a general categorical framework for the specification
of algebraic operations on regular behaviour based on distributive laws. The theory we have
presented works not only in $\set$ but also in many other categories including
vector spaces and other algebraic categories. In this paper we have instantiated
the general theory to several concrete specification formats in $\set$. It 
remains an interesting challenge to study concrete formats for distributive laws
on other categories, not only for our bipointed specifications but also for
distributive laws corresponding to GSOS. For example, working out a
format for the functor $FX = 2 \times X^A$ on the category of
join-semilattices will give a more direct way to define operations
like the shuffle product of formal languages which cannot be captured
by a bipointed specification for $F$ on $\set$. %
Finally, it is interesting to study extensions of the format introduced 
in this paper. We already mentioned the coGSOS format, and we will
investigate this more thoroughly in the future. One would also hope
for formats covering all the standard operations on formal languages
such as the Kleene star which, presently, does not arise as an
application of our theory. Since checking if a specification gives rise to
operations under which regular behaviour is closed
is in general undecidable, a complete format cannot
exist~\cite{AFV}.

\bibliographystyle{eptcs}
\bibliography{ops-rfp}

\begin{thebibliography}{10}
\providecommand{\bibitemdeclare}[2]{}
\providecommand{\surnamestart}{}
\providecommand{\surnameend}{}
\providecommand{\urlprefix}{Available at }
\providecommand{\url}[1]{\texttt{#1}}
\providecommand{\href}[2]{\texttt{#2}}
\providecommand{\urlalt}[2]{\href{#1}{#2}}
\providecommand{\doi}[1]{doi:\urlalt{http://dx.doi.org/#1}{#1}}
\providecommand{\bibinfo}[2]{#2}

\bibitemdeclare{article}{aceto94}
\bibitem{aceto94}
\bibinfo{author}{L.~\surnamestart Aceto\surnameend} (\bibinfo{year}{1994}):
  \emph{\bibinfo{title}{GSOS and Finite Labelled Transition Systems}}.
\newblock {\sl \bibinfo{journal}{Theoret.~Comput.~Sci.}}
  \bibinfo{volume}{131}(\bibinfo{number}{1}), pp. \bibinfo{pages}{181--195},
  \doi{10.1016/0304-3975(94)90094-9}.

\bibitemdeclare{incollection}{AFV}
\bibitem{AFV}
\bibinfo{author}{L.~\surnamestart Aceto\surnameend},
  \bibinfo{author}{W.~\surnamestart Fokkink\surnameend} \&
  \bibinfo{author}{C.~\surnamestart Verhoef\surnameend} (\bibinfo{year}{2001}):
  \emph{\bibinfo{title}{Structural Operational Semantics}}.
\newblock In: {\sl \bibinfo{booktitle}{Handbook of Process Algebra}},
  \bibinfo{publisher}{Elsevier Science}, pp. \bibinfo{pages}{197--292},
  \doi{10.1016/B978-044482830-9/50021-7}.

\bibitemdeclare{article}{AMV03}
\bibitem{AMV03}
\bibinfo{author}{J.~\surnamestart Ad\'{a}mek\surnameend},
  \bibinfo{author}{S.~\surnamestart Milius\surnameend} \&
  \bibinfo{author}{J.~\surnamestart Velebil\surnameend} (\bibinfo{year}{2003}):
  \emph{\bibinfo{title}{Free Iterative Theories: a coalgebraic view}}.
\newblock {\sl \bibinfo{journal}{Math.~Structures Comput.~Sci.}}
  \bibinfo{volume}{13}(\bibinfo{number}{2}), pp. \bibinfo{pages}{259--320},
  \doi{10.1017/S0960129502003924}.

\bibitemdeclare{article}{AMV06}
\bibitem{AMV06}
\bibinfo{author}{J.~\surnamestart Ad{\'a}mek\surnameend},
  \bibinfo{author}{S.~\surnamestart Milius\surnameend} \&
  \bibinfo{author}{J.~\surnamestart Velebil\surnameend} (\bibinfo{year}{2006}):
  \emph{\bibinfo{title}{Iterative algebras at work}}.
\newblock {\sl \bibinfo{journal}{Math.~Structures Comput.~Sci.}}
  \bibinfo{volume}{16}(\bibinfo{number}{6}), pp. \bibinfo{pages}{1085--1131},
  \doi{10.1017/S0960129506005706}.

\bibitemdeclare{book}{ar94}
\bibitem{ar94}
\bibinfo{author}{J.~\surnamestart Ad\'{a}mek\surnameend} \&
  \bibinfo{author}{J.~\surnamestart Rosick\'y\surnameend}
  (\bibinfo{year}{1994}): \emph{\bibinfo{title}{Locally presentable and
  accessible categories}}.
\newblock \bibinfo{publisher}{Cambridge University Press},
  \doi{10.1017/CBO9780511600579}.

\bibitemdeclare{book}{at}
\bibitem{at}
\bibinfo{author}{J.~\surnamestart Ad\'{a}mek\surnameend} \&
  \bibinfo{author}{V.~\surnamestart Trnkov\'a\surnameend}
  (\bibinfo{year}{1990}): \emph{\bibinfo{title}{Automata and Algebras in
  Categories}}.
\newblock {\sl \bibinfo{series}{Mathematics and its
  Applications}}~\bibinfo{volume}{37}, \bibinfo{publisher}{Kluwer Academic
  Publishers}.

\bibitemdeclare{article}{barr_coalg}
\bibitem{barr_coalg}
\bibinfo{author}{M.~\surnamestart Barr\surnameend} (\bibinfo{year}{1993}):
  \emph{\bibinfo{title}{Terminal coalgebras in well-founded set theory}}.
\newblock {\sl \bibinfo{journal}{Theoret.~Comput.~Sci.}}
  \bibinfo{volume}{114}(\bibinfo{number}{2}), pp. \bibinfo{pages}{299--315},
  \doi{10.1016/0304-3975(93)90076-6}.

\bibitemdeclare{phdthesis}{Bartels04}
\bibitem{Bartels04}
\bibinfo{author}{F.~\surnamestart Bartels\surnameend} (\bibinfo{year}{2004}):
  \emph{\bibinfo{title}{On generalised coinduction and probabilistic
  specification formats}}.
\newblock Ph.D. thesis, \bibinfo{school}{CWI, Amsterdam}.

\bibitemdeclare{article}{BIM95}
\bibitem{BIM95}
\bibinfo{author}{B.~\surnamestart Bloom\surnameend},
  \bibinfo{author}{S.~\surnamestart Istrail\surnameend} \&
  \bibinfo{author}{A.~\surnamestart Meyer\surnameend} (\bibinfo{year}{1995}):
  \emph{\bibinfo{title}{Bisimulation Can't be Traced}}.
\newblock {\sl \bibinfo{journal}{J. ACM}}
  \bibinfo{volume}{42}(\bibinfo{number}{1}), pp. \bibinfo{pages}{232--268},
  \doi{10.1145/200836.200876}.

\bibitemdeclare{unpublished}{BMS12}
\bibitem{BMS12}
\bibinfo{author}{M.~\surnamestart Bonsangue\surnameend},
  \bibinfo{author}{S.~\surnamestart Milius\surnameend} \&
  \bibinfo{author}{A.~\surnamestart Silva\surnameend} (\bibinfo{year}{2012}):
  \emph{\bibinfo{title}{Sound and complete axiomatizations of coalgebraic
  language equivalence}}.
\newblock \bibinfo{note}{Accepted for publication in {\em ACM
  Trans.~Comput.~Log.}}

\bibitemdeclare{article}{Courcelle83}
\bibitem{Courcelle83}
\bibinfo{author}{B.~\surnamestart Courcelle\surnameend} (\bibinfo{year}{1983}):
  \emph{\bibinfo{title}{Fundamental properties of infinite trees}}.
\newblock {\sl \bibinfo{journal}{Theoret.~Comput.~Sci.}} \bibinfo{volume}{25},
  pp. \bibinfo{pages}{95--169}, \doi{10.1016/0304-3975(83)90059-2}.

\bibitemdeclare{book}{DKV09}
\bibitem{DKV09}
\bibinfo{editor}{M.~\surnamestart Droste\surnameend},
  \bibinfo{editor}{W.~\surnamestart Kuich\surnameend} \&
  \bibinfo{editor}{H.~\surnamestart Vogler\surnameend}, editors
  (\bibinfo{year}{2009}): \emph{\bibinfo{title}{Handbook of weighted
  automata}}.
\newblock \bibinfo{series}{Monographs in Theoretical Computer Science},
  \bibinfo{publisher}{Springer}, \doi{10.1007/978-3-642-01492-5}.

\bibitemdeclare{book}{gu71}
\bibitem{gu71}
\bibinfo{author}{P.~\surnamestart Gabriel\surnameend} \&
  \bibinfo{author}{F.~\surnamestart Ulmer\surnameend} (\bibinfo{year}{1971}):
  \emph{\bibinfo{title}{Lokal pr\"asentierbare {K}ategorien}}.
\newblock {\sl \bibinfo{series}{Lecture Notes Math.}} \bibinfo{volume}{221},
  \bibinfo{publisher}{Springer-Verlag}.

\bibitemdeclare{article}{ginali}
\bibitem{ginali}
\bibinfo{author}{S.~\surnamestart Ginali\surnameend} (\bibinfo{year}{1979}):
  \emph{\bibinfo{title}{Regular trees and the free iterative theory}}.
\newblock {\sl \bibinfo{journal}{J.~Comput.~System Sci.}} \bibinfo{volume}{18},
  pp. \bibinfo{pages}{228--242}, \doi{10.1016/0022-0000(79)90032-1}.

\bibitemdeclare{inproceedings}{Klin07}
\bibitem{Klin07}
\bibinfo{author}{B.~\surnamestart Klin\surnameend} (\bibinfo{year}{2007}):
  \emph{\bibinfo{title}{Bialgebraic Operational Semantics and Modal Logic}}.
\newblock In: {\sl \bibinfo{booktitle}{Proc. of LICS 2007}}, pp.
  \bibinfo{pages}{336--345}, \doi{10.1109/LICS.2007.13}.

\bibitemdeclare{inproceedings}{Klin09}
\bibitem{Klin09}
\bibinfo{author}{B.~\surnamestart Klin\surnameend} (\bibinfo{year}{2009}):
  \emph{\bibinfo{title}{Structural Operational Semantics for Weighted
  Transition Systems}}.
\newblock In \bibinfo{editor}{J.~\surnamestart Palsberg\surnameend}, editor:
  {\sl \bibinfo{booktitle}{Semantics and Algebraic Specification}}, {\sl
  \bibinfo{series}{LNCS}} \bibinfo{volume}{5700},
  \bibinfo{publisher}{Springer}, pp. \bibinfo{pages}{121--139},
  \doi{10.1007/978-3-642-04164-8\_7}.

\bibitemdeclare{article}{Klin11}
\bibitem{Klin11}
\bibinfo{author}{B.~\surnamestart Klin\surnameend} (\bibinfo{year}{2011}):
  \emph{\bibinfo{title}{Bialgebras for structural operational semantics: An
  introduction}}.
\newblock {\sl \bibinfo{journal}{Theoret.~Comput.~Sci.}}
  \bibinfo{volume}{412}(\bibinfo{number}{38}), pp. \bibinfo{pages}{5043--5069},
  \doi{10.1016/j.tcs.2011.03.023}.

\bibitemdeclare{article}{lambek}
\bibitem{lambek}
\bibinfo{author}{J.~\surnamestart Lambek\surnameend} (\bibinfo{year}{1968}):
  \emph{\bibinfo{title}{A Fixpoint Theorem for Complete Categories}}.
\newblock {\sl \bibinfo{journal}{Math.~Z.}} \bibinfo{volume}{103}, pp.
  \bibinfo{pages}{151--161}, \doi{10.1007/BF01110627}.

\bibitemdeclare{book}{mp89}
\bibitem{mp89}
\bibinfo{author}{M.~\surnamestart Makkai\surnameend} \&
  \bibinfo{author}{R.~\surnamestart Par\'e\surnameend} (\bibinfo{year}{1989}):
  \emph{\bibinfo{title}{Accessible categories: the foundation of categorical
  model theory}}.
\newblock {\sl \bibinfo{series}{Contemporary Math.}} \bibinfo{volume}{104},
  \bibinfo{publisher}{Amer.~Math.~Soc.}, \bibinfo{address}{Providence, RI},
  \doi{10.1090/conm/104}.

\bibitemdeclare{inproceedings}{Milius10}
\bibitem{Milius10}
\bibinfo{author}{S.~\surnamestart Milius\surnameend} (\bibinfo{year}{2010}):
  \emph{\bibinfo{title}{A Sound and Complete Calculus for Finite Stream
  Circuits}}.
\newblock In: {\sl \bibinfo{booktitle}{Proc. of LICS 2010}},
  \bibinfo{publisher}{IEEE Computer Society}, pp. \bibinfo{pages}{421--430},
  \doi{10.1109/LICS.2010.11}.

\bibitemdeclare{book}{milner}
\bibitem{milner}
\bibinfo{author}{R.~\surnamestart Milner\surnameend} (\bibinfo{year}{1989}):
  \emph{\bibinfo{title}{Communication and Concurrency}}.
\newblock \bibinfo{publisher}{Prentice Hall}.

\bibitemdeclare{article}{Rutten05}
\bibitem{Rutten05}
\bibinfo{author}{J.~\surnamestart Rutten\surnameend} (\bibinfo{year}{2005}):
  \emph{\bibinfo{title}{A coinductive calculus of streams}}.
\newblock {\sl \bibinfo{journal}{Math.~Structures Comput.~Sci.}}
  \bibinfo{volume}{15}(\bibinfo{number}{1}), pp. \bibinfo{pages}{93--147},
  \doi{10.1017/S0960129504004517}.

\bibitemdeclare{article}{Rutten08}
\bibitem{Rutten08}
\bibinfo{author}{J.~\surnamestart Rutten\surnameend} (\bibinfo{year}{2008}):
  \emph{\bibinfo{title}{Rational Streams Coalgebraically}}.
\newblock {\sl \bibinfo{journal}{Log.~Methods~Comput.~Sci.}}
  \bibinfo{volume}{4}(\bibinfo{number}{3:9}), p. \bibinfo{pages}{22 pp.},
  \doi{10.2168/LMCS-4(3:9)2008}.

\bibitemdeclare{book}{Shallit08}
\bibitem{Shallit08}
\bibinfo{author}{J.~\surnamestart Shallit\surnameend} (\bibinfo{year}{2008}):
  \emph{\bibinfo{title}{A Second Course in Formal Languages and Automata
  Theory}}.
\newblock \bibinfo{publisher}{Cambridge University Press},
  \doi{10.1017/CBO9780511808876}.

\bibitemdeclare{inproceedings}{TP97}
\bibitem{TP97}
\bibinfo{author}{D.~\surnamestart Turi\surnameend} \&
  \bibinfo{author}{G.~\surnamestart Plotkin\surnameend} (\bibinfo{year}{1997}):
  \emph{\bibinfo{title}{Towards a Mathematical Operational Semantics}}.
\newblock In: {\sl \bibinfo{booktitle}{Proc. of LICS 1997}},
  \bibinfo{publisher}{IEEE Computer Society}, pp. \bibinfo{pages}{280--291},
  \doi{10.1109/LICS.1997.614955}.

\bibitemdeclare{article}{worrell}
\bibitem{worrell}
\bibinfo{author}{J.~\surnamestart Worrell\surnameend} (\bibinfo{year}{2005}):
  \emph{\bibinfo{title}{On the final sequence of a finitary set functor}}.
\newblock {\sl \bibinfo{journal}{Theoret.~Comput.~Sci.}} \bibinfo{volume}{338},
  pp. \bibinfo{pages}{184--199}, \doi{10.1016/j.tcs.2004.12.009}.

\end{thebibliography}

\end{document}